\definecolor{orcidlogocol}{HTML}{A6CE39}
\tikzset{
  orcidlogo/.pic={
    \fill[orcidlogocol] svg{M256,128c0,70.7-57.3,128-128,128C57.3,256,0,198.7,0,128C0,57.3,57.3,0,128,0C198.7,0,256,57.3,256,128z};
    \fill[white] svg{M86.3,186.2H70.9V79.1h15.4v48.4V186.2z}
                 svg{M108.9,79.1h41.6c39.6,0,57,28.3,57,53.6c0,27.5-21.5,53.6-56.8,53.6h-41.8V79.1z M124.3,172.4h24.5c34.9,0,42.9-26.5,42.9-39.7c0-21.5-13.7-39.7-43.7-39.7h-23.7V172.4z}
                 svg{M88.7,56.8c0,5.5-4.5,10.1-10.1,10.1c-5.6,0-10.1-4.6-10.1-10.1c0-5.6,4.5-10.1,10.1-10.1C84.2,46.7,88.7,51.3,88.7,56.8z};
  }
}
\newcommand\orcidicon[1]{\href{https://orcid.org/#1}{\mbox{\scalerel*{
\begin{tikzpicture}[yscale=-1,transform shape]
\pic{orcidlogo};
\end{tikzpicture}
}{|}}}}
\newtheorem{thm}{Theorem}
\newtheorem{cor}{Corollary}
\newtheorem{lem}{Lemma}
\newcommand{\BSlash}{\char`\\}
\newcommand{\ie}{i.e.,~}
\def\BibTeX{{\rm B\kern-.05em{\sc i\kern-.025em b}\kern-.08em
    T\kern-.1667em\lower.7ex\hbox{E}\kern-.125emX}}
\begin{document}

\title{\huge Multiple-Association Supporting HTC/MTC in Limited-Backhaul Capacity Ultra-Dense Networks}

\author{

Mohammed Elbayoumi$^{\orcidicon{0000-0002-0147-2144}}$, \IEEEmembership{Student Member,~IEEE}, Walaa Hamouda$^{\orcidicon{0000-0001-6618-5851}}$, \IEEEmembership{Senior Member,~IEEE}, and Amr Youssef$^{\orcidicon{0000-0002-4284-8646}}$, \IEEEmembership{Senior Member,~IEEE}
\thanks{M. Elbayoumi and A. Youssef are with Concordia Institute for Information Systems	Engineering, Concordia University, Montreal, QC H3G 1M8, Canada (e-mail:	mo\_lsay$@$encs.concordia.ca; youssef$@$ciise.concordia.ca).} 
\thanks{W. Hamouda is with the Department of Electrical and Computer Engineering, Concordia University, Montreal, QC H3G 1M8, Canada (e-mail: hamouda$@$ece.concordia.ca).}
}

\maketitle

\begin{abstract}
    Coexistence of Human-Type Communications (HTCs) and Machine-Type Communications (MTCs) is inevitable. Ultra-Dense Networks (UDNs) will be efficacious in supporting both types of communications. In a UDN, a massive number of low-power and low-cost Small Cells (SCs) are deployed with density higher than that of the HTC users. In such a scenario, the backhaul capacities constitute an intrinsic bottleneck for the system. Hence, we propose a multiple association scheme where each HTC user associates to and activates multiple SCs to overcome the backhaul capacity constraints. In addition, having more active cells allows for more MTC devices to be supported by the network. Using tools from stochastic geometry, we formulate a novel mathematical framework investigating the performance of the limited-backhaul capacity UDN in terms of Area Spectral Efficiency (ASE) for both HTC and MTC and the density of supported MTC devices. Extensive simulations were conducted to verify the accuracy of the mathematical analysis under different system parameters. Results show the existence of an optimum number of SCs to which an HTC user may connect under backhaul capacity constraints. Besides, the proposed multiple association scheme significantly improves the performance of MTC in terms of both ASE and density of supported devices. 
\end{abstract}
\begin{IEEEkeywords}
Backhaul capacity constraints, HTC, MTC, multiple associations, Poisson point process, stochastic geometry, UDN.
\end{IEEEkeywords}
\section{Introduction}
According to \cite{Cisco2020}, the fastest growing mobile category between 2018 and 2023 will be Machine-to-Machine (M2M) communications. It will grow at a $19\%$ Compound Annual Growth Rate (CAGR) or nearly 2.4 folds. This reflects an increase from around 6.1 billion devices in 2018 to approximately 14.7 billion devices by 2023. Within the M2M category, connected car applications will be the fastest growing category with $30\%$ CAGR. In the same interval, smartphones will grow at a $7\%$ CAGR (or 1.4 fold) reflecting the second fastest growing category with an increase from 4.9 billion devices to 6.7 billions. Hence, it becomes clear that coexistence of HTC and MTC is inevitable in future cellular communications \cite{8712527, 8713691, 8516289}. However, a Machine-Type Communication Device (MTCD) should be handled differently compared to a Human-Type Communication User (HTCU) \cite{9022993, 8292419}. 
In Machine-Type Communication (MTC), devices will communicate with each other with minimal human intervention. MTCDs with their small packet-sizes, their massive numbers, and required massive number of simultaneous connections impose significant challenges on the next generations of cellular networks.

In an Ultra-Dense Network (UDN) environment \cite{7476821}, the cellular network can be seen as a mobile network following the users. In other words, the serving Small Cell (SC) will be always close to the user, which significantly enhances the quality of the radio link. Besides, the evolution of today's mobile broadband services necessitates the next generation of cellular networks to be capable of providing much higher end-user data rates. While UDN can significantly enhance the radio link by shortening the distances between transmitters and receivers, backhaul links may impose practical capacity limitations. In particular, with the very high density of SCs, it becomes challenging to support them by fiber links leading to limited backhaul capacities \cite{7306534}. Hence, multiple associations of SCs \cite{7931666} can mitigate such limitations in the backhaul capacity. However, the tremendous Inter-Cell Interference (ICI) found in UDN must be mitigated by adopting idle mode capabilities of the SCs \cite{8713691}. In other words, only an HTCU should be allowed to activate one or more SCs. On the other hand, to exploit the UDN environment, MTCDs should also be able to activate the nearest SC as well. However, this may lead to cases where almost all SCs are activated due to the very high density of MTCDs. In the same scope, many of the operating bands defined by the 3GPP for the New Radio (NR) in \textit{Frequency Range 1} and all of the operating bands in \textit{Frequency Range 2} are unpaired bands, \ie Time Division Duplex (TDD) is used for the same frequency band for both uplink and downlink \cite{DAHLMAN201827}. Hence, activating all SCs simultaneously will significantly deteriorate the performance of the downlink of the HTCUs.

In a different scope, fiber links are known to provide a capacity of more than 10 Gbps with very limited latency in the order of hundreds of microseconds \cite{7456186}. Hence, they are considered as the optimal backhaul choice. However, connecting all cells in a UDN with fiber links can be challenging. Deployment cost, and deployment time are very high, and in some scenarios it is not even applicable with the massive numbers of small cells. Hence, wireless backhauling can be a promising alternative which, however, suffers from limitations on the achievable capacities \cite{7306534}. In this regard, backhaul capacity constraints (or fronthaul capacity constraints in a Cloud Radio Access Network (CRAN)) have been tackled in many works in the literature \cite{9050646, 8883218, 8423651, 7809091, 8467547, 8826383}. 

For example, in \cite{9050646}, the authors investigated the effect of a limited fronthaul capacity on the downlink performance of a heterogeneous CRAN. In doing so, they considered a hybrid Millimeter-Wave (mmWave) and free space optical fronthaul links. It was shown that different fronthaul capacities associated to the Remote Radio Heads (RRHs) require different biasing factors of these RRHs to provide a better coverage. Integrated Access and Backhaul (IAB) has been considered in \cite{8514996, 8493520, 8882288}. In \cite{8514996}, the authors considered mmWave bands and investigated both throughput and communication latency in an IAB scenario. In \cite{8493520, 8882288}, the authors analyzed the downlink rate coverage probability using mmWave for IAB under different bandwidth partitioning strategies. Alternatively, the works in \cite{8690797, 8891922} analyzed the performance of a finite fronthaul capacity cell-free massive Multiple-Input Multiple-Output (MIMO) system in the downlink and uplink, respectively.  

In this paper, we tackle the backhaul capacity limitations problem using multiple association of SCs to an HTCU. By doing so, the required high data rates by HTCUs can be split among multiple SCs to match the available backhaul capacities. Besides, more SCs will be activated and higher number of MTCDs can be supported. Unlike the work in \cite{7931666}, we study the effect of the limited backhaul capacity on the system, consider a fixed bandwidth, and investigate the effect of multiple association on the MTC performance. The main contributions in this paper can be summarized as:
\begin{itemize}
    \item We propose a multiple association for HTCUs under fixed bandwidth allocation and limited backhaul capacity to improve the achievable Area Spectral Efficiency (ASE). We show that an optimum \textit{MultiCell} size exists which depends on the backhaul link capacities, density of SCs, and density of users.
    \item To efficiently share the available bandwidth among the multiple cells serving the same HTCU, we consider the different scenarios of how the HTCUs may be associated to SCs including those scenarios where a conflict may exist among the different users. 
    \item In parallel, we study the MTC uplink performance under the proposed multiple association scheme and show how it can increase both the density of supported MTCDs and the achievable ASE by those devices.
    \item Using tools from stochastic geometry, we provide  closed form expressions for the achievable ASE in HTC and MTC and the density of supported MTCDs. 
\end{itemize}

The rest of the paper is organized as follows. In Section II, we describe the system model and problem formulation. Section III provides the necessary analysis for both HTC and MTC and concludes by giving closed form expressions for the different performance metrics. The obtained  Monte-Carlo simulation results and analytical expressions are reported and discussed in Section IV. Finally, Section V concludes the findings in the paper. 

\section{System Model}
\begin{figure}
	\centering
	\includegraphics[width=0.9\textwidth]{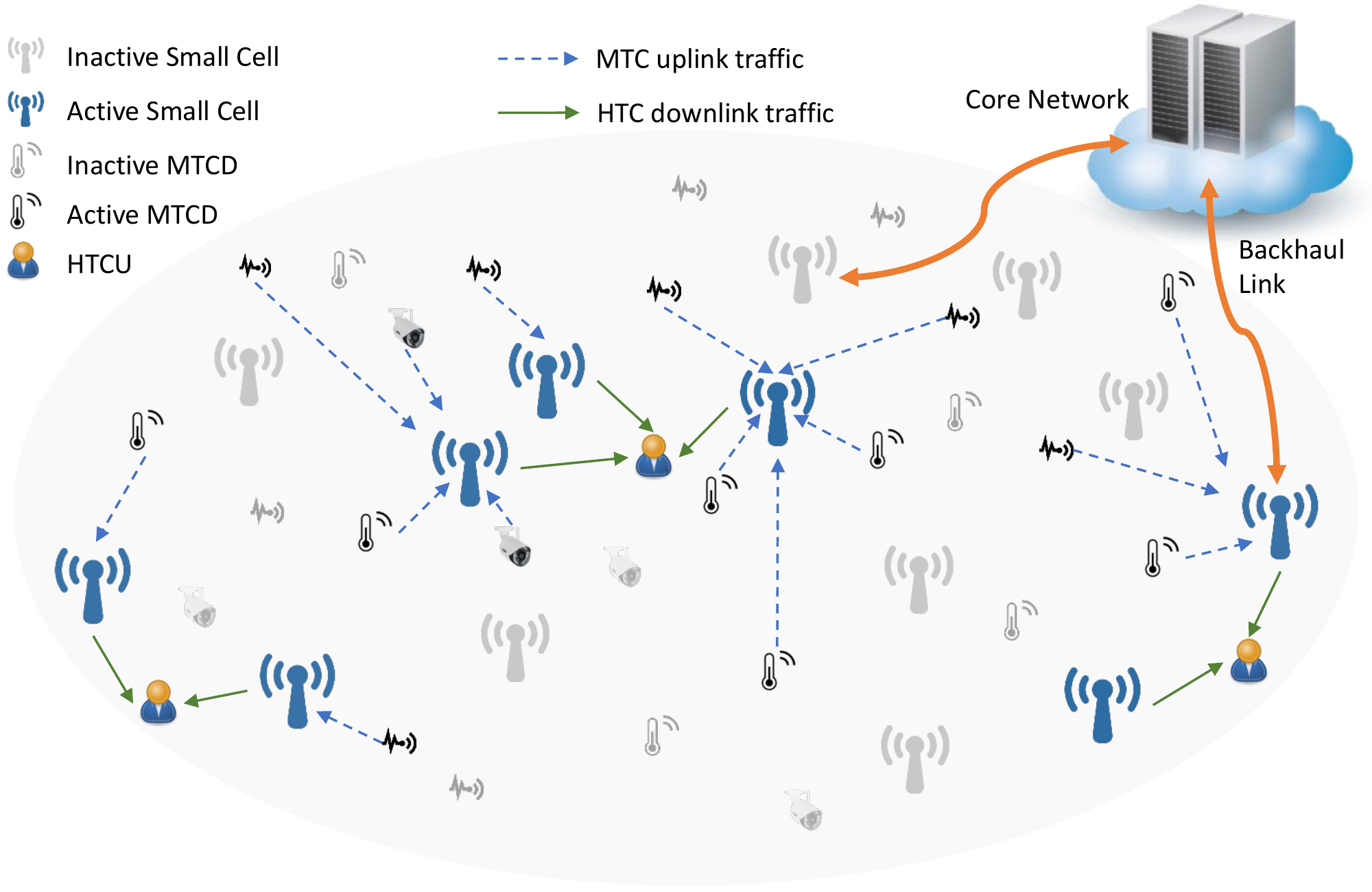}
	\caption{An illustration of the considered system model.}
	\label{fig:system_model}
\end{figure}
We consider a UDN environment with massive number of limited-backhaul SCs serving HTCUs and MTCDs. In our system model, we consider a downlink scenario of Human-Type Communication (HTC) where each HTCU may associate to more than one cell. In parallel, we study the effect of this multiple association on the performance of uplink MTC. Such a scenario lies under the use cases of both Enhanced Mobile Broad-Band eMBB and massive MTC (mMTC) targeted in the 5G and beyond \cite{9022993}. All of the SCs, HTCUs, and MTCDs are spatially distributed according to three independent Homogeneous Poisson Point Processes (HPPP)s, $\Phi_s$, $\Phi_h$, and $\Phi_m$ with intensities $\lambda_s$, $\lambda_h$, and $\lambda_m$, respectively. In a UDN environment, the density of SCs is much higher than density of HTCUs, \ie $\lambda_s \gg \lambda_h$ \cite{7476821}. However, this is not the case for the MTCDs where a heavily loaded regime is assumed such that $\lambda_m \gg \lambda_s$ which coincides with the mMTC scenario even under a UDN assumption. For a practical scenario, only a fraction of the existing MTCDs will be active at a certain time instant \cite{8903561}. Hence, we assume a fraction $\eta$ of MTCDs will be active which yields a thinned HPPP $\Phi_m^a \subset \Phi_m$ with density $\lambda_m^a=\eta \lambda_m$ for the active MTCDs. An illustration of the considered system model is shown in Fig. \ref{fig:system_model}. 

In the downlink, we consider ICI from only active cells where at least one HTCU is served by the small cell. Also, the achievable ASE comes solely from those active cells. Similarly, for the uplink MTC, the achievable ASE and the ICI both come solely from the supported active MTCDs. We assume that each SC can support a maximum number of MTCDs equal to the number of available Resource Blocks (RB)s. If the number of associated active MTCDs to a specific cell is less than the available number of RBs, then, those active MTCDs are randomly distributed over the RBs such that each MTCD exploits a single RB. Alternatively, if the number  of  associated  active  MTCDs  is larger than the available number of RBs, only a number of active MTCDs equal to the number of RBs can be supported by this cell.

All SCs, HTCUs, and MTCDs are equipped with single omnidirectional antennas. We assume a traditional path loss model in which the signal attenuates with distance $d$ as $d^{-\alpha}$ where $\alpha>2$ is the path loss exponent. For the multi-path fading, we assume Rayleigh fading channels where the channel gains are exponentially distributed with unit mean. Besides, we assume a block fading model such that the channel gain is fixed over a Transmit Time Interval (TTI) and changes independently from one TTI to another. In addition, we assume that all active transmitting nodes (SCs and MTCDs) have infinitely backlogged packets to transmit.

\subsection{Limited Backhaul Capacity}
Taking into consideration the difficulties and challenges in supporting the massive number of SCs with sufficiently large capacities in the backhaul links, we assume limited backhaul capacities for the SCs in the downlink traffic \cite{7456186}. However, for the uplink, the MTC traffic consists usually of small packets accompanied with low data rates. Hence, one does not need to consider specific limitations on the backhaul link capacities. To further illustrate, we assume that each SC is supported by a fixed limited normalized backhaul capacity $\rho$ (bps/Hz) in the radio link, \ie the supported downlink rate by each SC per one Hz is upper bounded by $\rho$. This assumption reflects a more practical scenario where the provided backhaul capacity is proportional to the allocated bandwidth in the radio link \cite{8891922}. Hence, the instantaneous achievable rate per cell per one Hz is
\begin{equation}
    \hat{R}=\min (R,\rho)
    \label{eq:R_min}
\end{equation}
where $R$ is the instantaneous achievable rate per cell per one Hz in the radio link.

\subsection{Human-Type Communication (HTC) and Multiple Association}
We refer to the conventional association scheme where each HTCU connects to and activates the nearest SC as the single association scheme. In such scheme, we expect two different scenarios illustrated in Fig. \ref{fig:single_assoc_scenarios}. In the first scenario, \textit{S1}, an active SC serves exactly one HTCU while in the second scenario, \textit{S2}, an active SC serves more than one HTCU. In the latter, we assume that the multiple users served by the same SC will share the available RBs orthogonally based on a Frequency Division Multiple Access (FDMA) approach. In doing so, the achievable data rate from a specific SC is divided among the multiple users such that no mutual interference exists among them. However, the considered SC will generate ICI to the neighboring cells over the whole bandwidth regardless of the number of served HTCUs.

\begin{figure}
	\centering
	\includegraphics[width=0.45\textwidth]{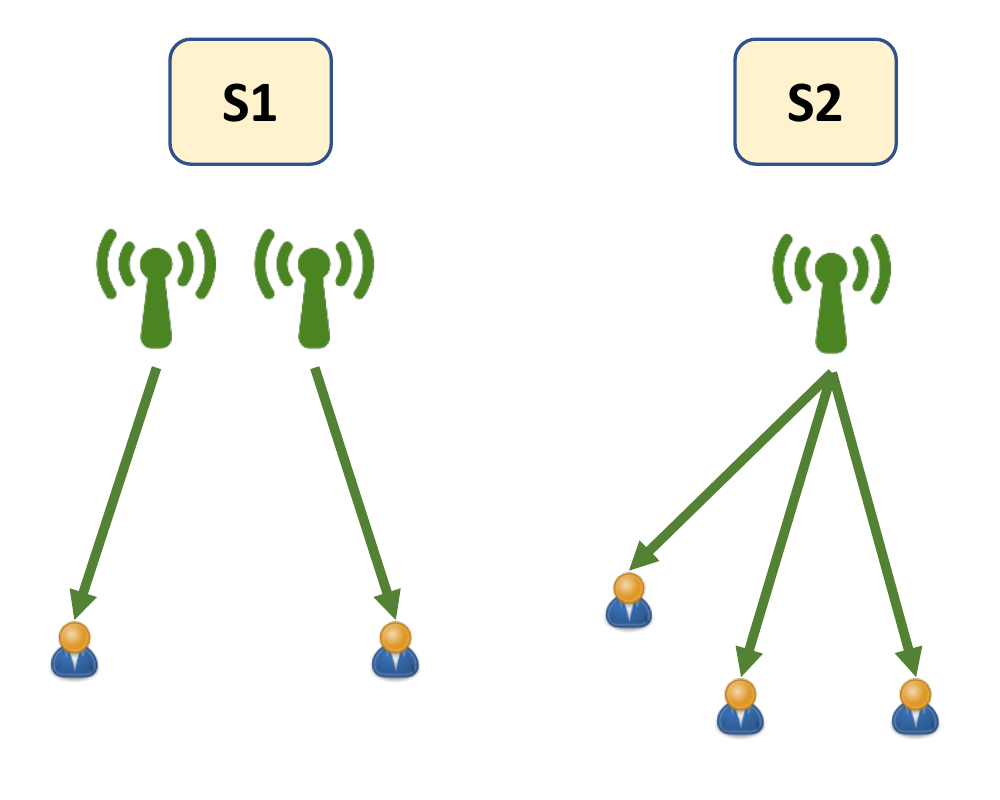}
	\caption{Different scenarios in single association scheme.}
	\label{fig:single_assoc_scenarios}
\end{figure}

For a double (multiple) association scheme, each HTCU connects to and activates the first and the second (up to $M$) nearest SCs. We refer to the group of two (or more in case of multiple association) cells as a \textit{MultiCell}. Also, the set of the first nearest SCs to the HTCUs are denoted by \textit{Tier-1}, the second nearest set of SCs by \textit{Tier-2}, and so on. Low-ordered tiers refer to the near cells while higher ordered tiers refer to further cells. Since the available spectrum is fixed and the frequency reuse factor is one, we assume that the available frequency will be divided equally among the different tiers. It is noteworthy that the more bandwidth given to \textit{Tier-1}, the higher the spectrum efficiency will be. In other words, the highest spectrum efficiency can be achieved when the whole frequency band is allocated to \textit{Tier-1} (single association), however, the backhaul capacities will be the limiting factor. 
In this paper, we investigate the system performance under equal frequency allocation versus the \textit{MultiCell} size.

Similar to single association, in a double association scheme, an active SC may serve exactly one user or more than one user as shown in Fig. \ref{fig:double_assoc_scenarios}. Different from single association, an SC serving more than one user may be associated to users on different tiers (\textit{S3}). We refer to those users as \textit{conflicting users}. This will result in severe interference on the users served on the higher-ordered tiers, \ie the interference will be higher than the useful signal. To overcome this challenge, one may interchange the frequency bands associated to the different tiers of the \textit{conflicting users}.  However, this will require optimization over the whole network which may not be practically feasible. Alternatively, one may switch-off (disconnect) the higher tier(s) of those \textit{conflicting users} as illustrated in \textit{S3} in Fig. \ref{fig:double_assoc_scenarios}. It is noteworthy that when an SC serves more than one user on the same tier, the achievable data rate is divided among the multiple users similar to single association scheme.

\begin{figure}
	\centering
	\includegraphics[width=0.8\textwidth]{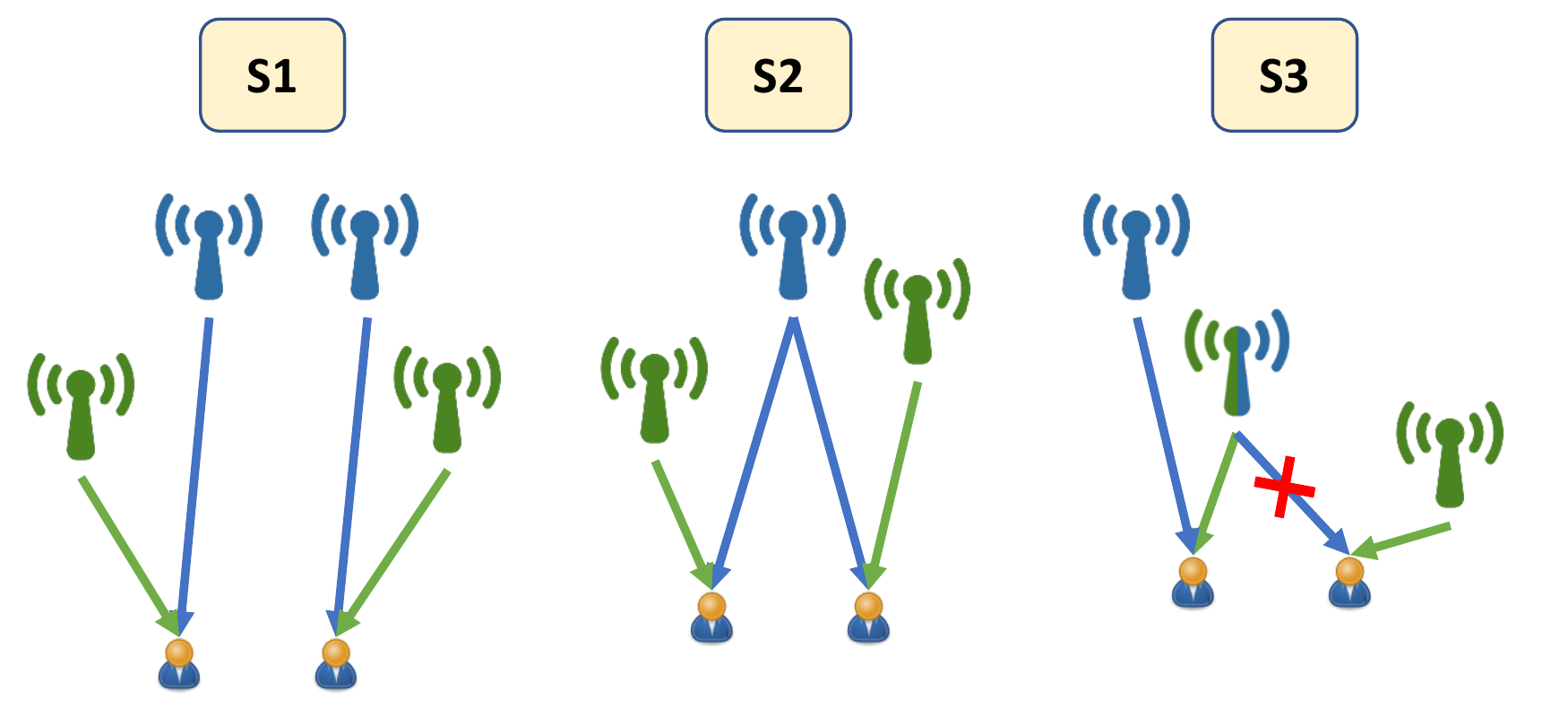}
	\caption{Different scenarios in double association scheme.}
	\label{fig:double_assoc_scenarios}
\end{figure}

In multiple association, a system with a \textit{MultiCell} size $M$ will have the active cells clustered around each HTCU. Hence, the active cells form a point process $\Phi_s^a$ thinned from the HPPP $\Phi_s$.
Fig. \ref{fig:pcp} illustrates how the active cells will be distributed around the HTCUs for different sets of small cell densities. The darkest cells are those cells belonging to \textit{Tier-1} and brighter cells represent higher-ordered tiers. In case of a cell serving more than one HTCU on different tiers, the higher-ordered tier(s) will be disconnected as explained earlier. As a consequence, we can note from Fig. \ref{fig:pcp_a} that the majority of the cells are activated as low-ordered (dark coloured) tiers when the density of SCs $\lambda_s$ is not relatively large compared to $M\lambda_h$. As $\lambda_s$ increases, more cells are activated on the different tiers while the cell areas get smaller as clear from Fig. \ref{fig:pcp_b}.

\begin{figure*}
\centering
\subfloat[$\lambda_s = 1,000~\text{cells/km}^2$]{\includegraphics[width=0.45\textwidth, height=0.45\textwidth, trim=1.5cm 1cm 2cm 1cm,clip]{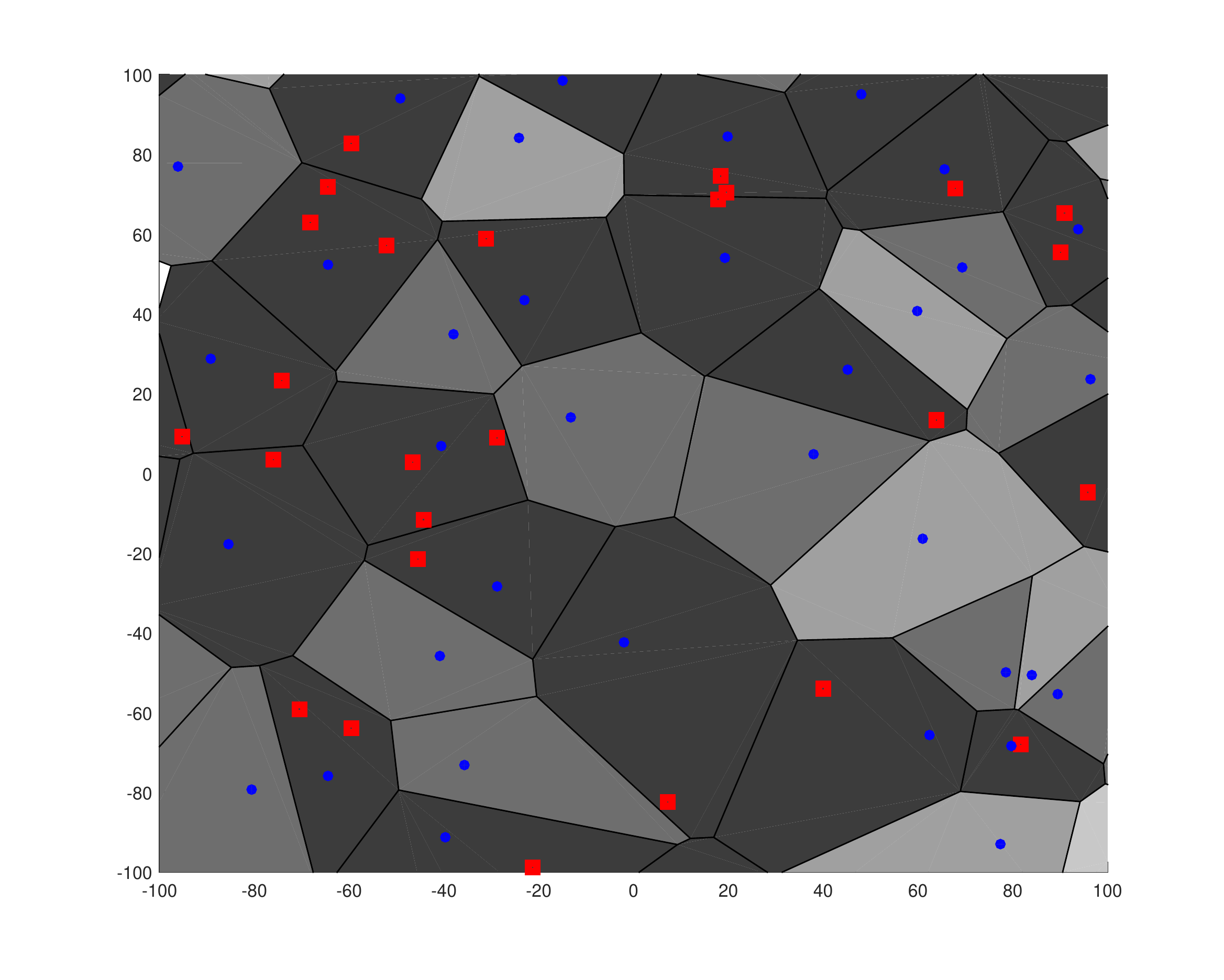}\label{fig:pcp_a}}
\subfloat[$\lambda_s = 5,000~\text{cells/km}^2$]{\includegraphics[width=0.45\textwidth, height=0.45\textwidth, trim=1.5cm 1cm 2cm 1cm,clip]{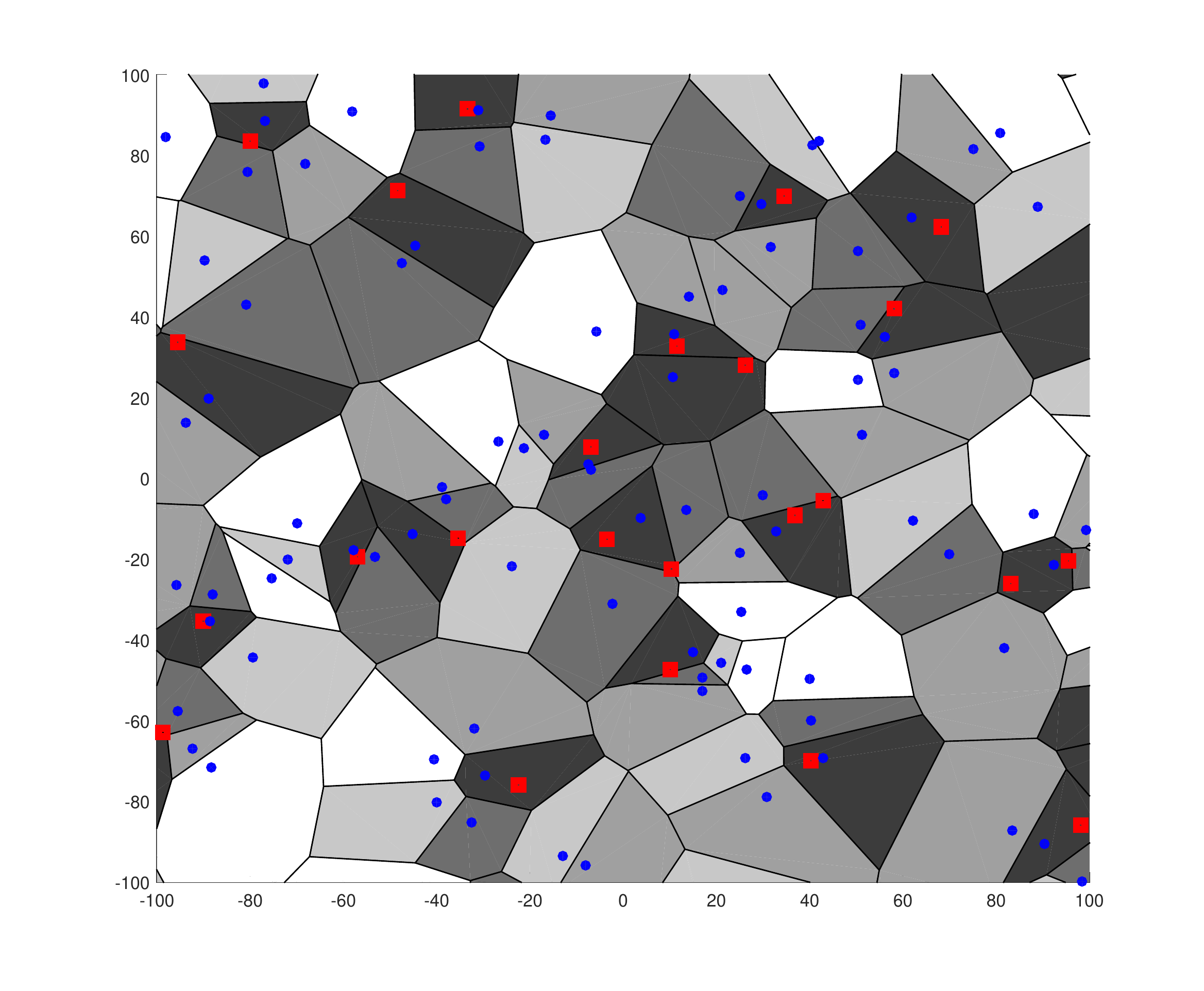}\label{fig:pcp_b}}
\caption{Different active cells realizations for different values of $\lambda_s$, where $\lambda_h=500$ (HTCUs/km$^2$) and \textit{MultiCell} size $M=5$. The red squares indicate HTCUs and the blue dots denote active cells. Dark and bright cells reflect low and high ordered tiers, respectively}
\label{fig:pcp}
\end{figure*}

\subsection{Machine-Type Communication (MTC)}
Following the multiple association scheme for the HTCUs, the active MTCDs will connect to the \textit{nearest-active} cell to upload their data. In doing so, the MTCDs will benefit from the additionally activated SCs to support their required massive number of simultaneous connections and to boost their achievable ASE. In order to investigate the achievable gain, we focus on a practical scenario where each cell has a limited allocated bandwidth. In other words, we assume a fixed number of RBs $N_{RB}$ per each  active cell where each MTCD requires exactly one RB to transmit its data to the SC. In addition, we assume that the MTCDs associated to a certain SC are randomly distributed over the available RBs in order to minimize the ICI. Hence, one may conclude that the number of supported MTCDs per cell will be upper bounded by the number of available RBs. 

It is noteworthy that when the active cells are spatially distributed according to a HPPP, the cell sizes of all cells are Independently and Identically Distributed (i.i.d.) and their distributions are given in \cite{FERENC2007}. Accordingly, from \cite{Yu2013}, the probability mass function (pmf) of the number of users/devices per cell 
is given by
\begin{equation}
    f_N(n)=\frac{(\frac{3.5}{\lambda+3.5})^{3.5}}{\Gamma(3.5)}~\Big( \frac{\lambda}{\lambda+3.5}\Big)^n~\frac{\Gamma(n+3.5)}{\Gamma(n+1)}.
    \label{eq:N}
\end{equation}

\section{Performance Analysis}
It is known that the main limiting factor for adopting single association is the limited backhaul capacity. If the achievable rate in the radio link per cell per Hz is $R$ and the backhaul capacity is $\rho~(bps/Hz)$, then, the actual achievable rate $\hat{R}$ will be as given in (\ref{eq:R_min}). Hence, we investigate the effect of the proposed multiple association scheme under such backhaul capacity limitations by providing closed-form expression for the achievable ASE for HTC. In addition, we proceed with the analysis to study the effect of the proposed multiple association scheme on the MTC performance in terms of both density of supported connections and achievable ASE.

\subsection{Multiple Association}

In the proposed multiple association scheme, the number of active cells depends on the relative densities between SCs and HTCUs as well as the \textit{MultiCell} size. Since the HTCUs are uniformly distributed according to HPPP, it is possible to have some HTCUs which are in close proximity to each other. Hence, these close-by HTCUs may be associated to common cells either on the same or on different tiers. Following our assumption illustrated in Fig. \ref{fig:double_assoc_scenarios}, any SC serving more than one user on different tiers will be disconnected on the higher-ordered tier(s). As a result, we expect that the density of active cells on each tier will decrease as the tier order increases. It is to be noted that the density of active cells on a certain tier $k$, $\lambda_k^a$ will be independent of the higher-ordered tiers. For example, the density of active cells on \textit{Tier-1} $\lambda_1^a$ is independent from the \textit{MultiCell} size $M$. Hence, using the probability mass function (pmf) of the number of users per cell given in (\ref{eq:N}), the probability of cell activation on \textit{Tier-1} can be calculated as
\begin{equation}
    p_1^a=1-f_N(0)=1-\left( \frac{3.5}{3.5+\frac{\lambda_h}{\lambda_s}}\right)^{3.5}.
    \label{eq:p_a_1}
\end{equation}
For higher order tiers, \textit{Tier-k} where $1<k\le M$, Lemma \ref{lem:lambda_k} gives the density of active cells in each tier.

\begin{lem}
The density of active cells on a specific Tier-k, $1<k\le M$ is given by
\begin{equation}
    \lambda_k^a=\mathcal{C} \left( \frac{3.5}{3.5+\frac{k\lambda_h}{\lambda_s}}\right)^{3.5} \lambda_s,
    \label{eq:lambda_k}
\end{equation}
\begin{equation}
    \mathcal{C}=\left( \frac{3.5 \lambda_s+k \lambda_h}{3.5 \lambda_s+(k-1) \lambda_h} \right)^{3.5} -1  .
    \label{eq:lambda_k_correction}
\end{equation}
\label{lem:lambda_k}
\end{lem}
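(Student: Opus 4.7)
My plan is to characterize the event that a typical SC is active on Tier-$k$, compute its probability by a set-subtraction argument over higher-order Voronoi regions, and then multiply by $\lambda_s$ (i.e.\ thin the HPPP $\Phi_s$) to obtain the density. First I would pin down the disconnection rule from Fig.~\ref{fig:double_assoc_scenarios}: an SC $X$ is active on Tier-$k$ (with $k>1$) iff (i) at least one HTCU has $X$ as its $k$-th nearest SC, and (ii) no HTCU has $X$ among its $1,\dots,k-1$ nearest SCs; failure of (ii) would make every would-be Tier-$k$ user at $X$ the higher-tier side of a conflict and force its disconnection.

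Next I would recast both conditions via higher-order Voronoi territories. Let $V_j(X)=\{y\in\mathbb{R}^2:X\text{ is among the }j\text{ nearest points of }\Phi_s\text{ to }y\}$, so $V_{k-1}(X)\subset V_k(X)$ and $\mathbb{E}|V_k(X)|=k/\lambda_s$ since each location belongs to exactly $k$ such territories. Writing $N_h$ for the count of $\Phi_h$, condition (ii) reads $N_h(V_{k-1})=0$ while (i) reads $N_h(V_k\setminus V_{k-1})\ge 1$; disjointness then gives $p_k^a=\Pr\bigl(N_h(V_{k-1})=0\bigr)-\Pr\bigl(N_h(V_k)=0\bigr)$. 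I would then extend \eqref{eq:N} to $V_j$: treating $|V_j|$ as a scaled gamma of shape $3.5$ with mean $j/\lambda_s$ (the same gamma fit invoked at $j=1$ in \cite{FERENC2007,Yu2013}) and mixing over the HPPP $\Phi_h$ yields the negative-binomial pmf of \eqref{eq:N} with ratio $j\lambda_h/\lambda_s$ in place of $\lambda$, so that $\Pr(N_h(V_j)=0)=\bigl(3.5/(3.5+j\lambda_h/\lambda_s)\bigr)^{3.5}$. Substituting $j=k-1$ and $j=k$ into the above difference, factoring $\bigl(3.5/(3.5+k\lambda_h/\lambda_s)\bigr)^{3.5}$ out of both terms, and clearing $\lambda_s$ from the remaining bracket recovers $\mathcal{C}$ as in \eqref{eq:lambda_k_correction}; multiplying by $\lambda_s$ then produces \eqref{eq:lambda_k}.

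The hardest step is precisely the extension of the one-cell pmf \eqref{eq:N} to higher-order Voronoi territories. The standard fit only directly justifies a scaled-gamma law of shape $3.5$ for \emph{first}-order cell areas, and no clean closed-form marginal for $|V_k(X)|$ with $k>1$ is readily available in the literature. The cleanest defence is to observe that $V_k(X)$ decomposes (up to boundary overlaps) as the union of $k$ first-order cells and therefore inherits the same shape parameter under the same scaled-gamma fit, only with scale $k/(3.5\lambda_s)$; Poissonian mixing over $\Phi_h$ then delivers the stated negative binomial. Any residual error in this approximation is of the type that would be corroborated by the Monte-Carlo comparison the authors already perform in Section~IV.
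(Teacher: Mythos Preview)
Your argument is correct and arrives at the same formula via the same subtraction $\lambda_k^a=\lambda_k^{all}-\lambda_{k-1}^{all}$, but you justify the intermediate void probability differently from the paper. The paper never introduces higher-order Voronoi territories; instead it invokes the heuristic of \cite{7511520}, replacing each HTCU (which activates its $k$ nearest SCs) by $k$ independent uniformly-scattered pseudo-users, each activating only its single nearest SC, so that the effective user density becomes $k\lambda_h$ and the first-order formula \eqref{eq:N} applies verbatim with ratio $k\lambda_h/\lambda_s$. Your route instead identifies the exact event---$X$ is active on Tier-$k$ iff $N_h(V_{k-1})=0$ and $N_h(V_k\setminus V_{k-1})\ge 1$---and approximates only at the step of assigning a gamma law of shape $3.5$ to $|V_j|$. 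Both approximations deliver the identical closed form, but yours isolates the error to a single distributional fit, whereas the paper's is quicker because it simply recycles the first-order result wholesale. One caveat on your final paragraph: $V_k(X)$ is \emph{not} a union of $k$ first-order Voronoi cells (it is a union of order-$k$ Voronoi faces), so that particular defence of the shape parameter does not hold; the mean $\mathbb{E}|V_k|=k/\lambda_s$ is exact by your counting argument, but the gamma shape remains a heuristic validated, as you note, by simulation.
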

\begin{proof}
Let $\lambda_k^{all}$ represent the density of all active cells on all tiers up to \textit{Tier-k}. The HTCUs are spatially distributed according to an HPPP such that each HTCU associates to the nearest $k$ cells ($k\le M$). Alternatively, we follow the assumption in \cite{7511520} and replace each HTCU by $k$ users which are uniformly distributed such that each user associates only to the nearest cell. Hence, the density of HTCUs becomes $k\lambda_h$ and the activation probability of all active cells up to \textit{Tier-k} is given by
\begin{equation}
    \lambda_k^{all}~=~p_k^{all}~\lambda_s~=~\left[1-\Big( \frac{3.5}{3.5+\frac{k\lambda_h}{\lambda_s}}\Big)^{3.5}\right] \lambda_s.
    \label{eq:lambda_k_all}
\end{equation}
Next, we evaluate the density of active cells on \textit{Tier-k} by subtracting $\lambda_k^{all}-\lambda_{k-1}^{all}$ which gives (\ref{eq:lambda_k}) and completes the proof. 
\end{proof}

As the tier order increases in (\ref{eq:lambda_k_all}), the activation probability of all cells up to \textit{Tier-k} increases. However, from (\ref{eq:lambda_k}) and (\ref{eq:lambda_k_correction}) one can conclude that for higher $k$, the density of active cells on \textit{Tier-k} gets smaller. 

\subsection{Human-Type Communication}
To investigate the performance of HTC, we assume a typical HTCU located at the origin which connects to and activates the nearest $M$ cells. This typical user reflects the performance of all users existing in the network according to Slivnyak's theorem \cite{haenggi_2012}. The achievable rate by this typical user associated with a SC on \textit{Tier-k}, reflects the average achievable rate per cell per one Hz of that tier. From the HPPP distribution of the SCs, the probability that a circle of radius $r$ centered at the typical user includes exactly $k$ SCs is
\begin{equation}
\mathbb{P}(n=k;r) = \frac{{(\pi \lambda_s r^2)}^k}{k!} e^{-\pi \lambda_s r^2}.
\end{equation}
Let $r_k$ be the distance between the typical user and the $k^{th}$ nearest SC. 
Hence, the probability density function (pdf) of the distance between the typical user and the $k^{th}$ nearest SC is given by \cite{haenggi_2012}
\begin{equation}
    f_{r_k}(r)=\frac{2(\pi \lambda_s)^k}{(k-1)!} r^{2k-1} e^{-\pi \lambda_s r^2}.
    \label{eq:r_k}
\end{equation}
The rate achievable by the typical user in the radio link from the $k^{th}$ tier is given by
\begin{equation}
    R_k=\log_2(1+\gamma_k),
    \label{eq:R_k}
\end{equation}
In (\ref{eq:R_k}), $\gamma_k$ is the instantaneous Signal-to-Noise plus Interference Ratio (SINR) for the typical user, given by,
\begin{equation}
    \gamma_k=\frac{P h_k r_k^{-\alpha}}{I_k+\sigma^2}=\frac{h_k r_k^{-\alpha}}{I'_k+\frac{\sigma^2}{P}},
    \label{eq:gamma_k}
\end{equation}
with $P$ is the transmission power of the SC, $h_k$ and $r_k$ are the channel gain and the distance between the typical user and its serving cell on the $k^{th}$ tier, respectively, and $\sigma^2$ is the power of the Additive White Gaussian Noise (AWGN). Furthermore, the term $I'_k$ represents the normalized ICI at the typical user received from tier $k$. Since the bandwidth is shared orthogonally among the different tiers, the normalized ICI is given by 
\begin{equation}
    I'_k=\sum_{j\in \Phi_k^a \BSlash b_{k_0}}h_j r_j^{-\alpha},
    \label{eq:I_k}
\end{equation}
where $\Phi_k^a$ is the set of all active cells on \textit{Tier-k}, $b_{k_0}$ is the serving cell on \textit{Tier-k}, and $h_j$ and $r_j$ are the channel gain and distance between the typical user and the interfering cell belonging to \textit{Tier-k}, respectively. 

\begin{thm}
In a multiple association scheme with \textit{MultiCell} size $M$, the average achievable rate per cell on Tier-k, assuming infinite backhaul capacity, is given by
\begin{equation}
    \Bar{R_k^M}=\int_0^{\infty}\frac{1}{\left(1+p_k^a(2^{Mt}-1)^{\frac{2}{\alpha}}\int_{(2^{Mt}-1)^{-\frac{2}{\alpha}}}^{\infty}\frac{1}{1+u^{\frac{\alpha}{2}}}\mathrm{d}u\right)^k}\mathrm{d}t,
\end{equation}
where $p_k^a$ is obtained from (\ref{eq:lambda_k}) with $p_k^a=\frac{\lambda_k^a}{ \lambda_s}$.
\label{thm:R_k_M}
\end{thm}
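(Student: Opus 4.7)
The plan is to reduce the average-rate expression to an SINR coverage-probability computation and then evaluate it with the standard stochastic-geometry toolkit. Since the total bandwidth is split equally among the $M$ tiers, the per-Hz rate delivered to the typical user on Tier-$k$ is $\frac{1}{M}\log_2(1+\gamma_k)$. I would first apply the identity $\mathbb{E}[X]=\int_0^{\infty}\mathbb{P}(X>t)\,\mathrm{d}t$ for nonnegative $X$, obtaining
\begin{equation*}
\bar{R_k^M}=\int_0^{\infty}\mathbb{P}\!\left(\gamma_k>2^{Mt}-1\right)\mathrm{d}t,
\end{equation*}
so the entire problem reduces to computing the coverage probability at the SINR threshold $\tau=2^{Mt}-1$. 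Note that the absence of a noise term in the theorem tells us we are in the interference-limited regime ($\sigma^2/P$ is dropped), which I would state as an assumption up front.

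Next I would condition on the serving distance $r_k$ and exploit the Rayleigh fading. Since $h_k\sim\operatorname{Exp}(1)$,
\begin{equation*}
\mathbb{P}(\gamma_k>\tau\mid r_k)=\mathbb{E}\!\left[\exp\!\left(-\tau r_k^{\alpha}I'_k\right)\right]=\mathcal{L}_{I'_k}(\tau r_k^{\alpha}).
\end{equation*}
For the interferer process on Tier-$k$ I would invoke Lemma~\ref{lem:lambda_k}: conditional on the serving cell being the $k$-th nearest SC at distance $r_k$, the remaining Tier-$k$ interferers are treated as a PPP of intensity $\lambda_k^a=p_k^a\lambda_s$ on $\{r>r_k\}$. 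Applying the standard PGFL of a PPP and averaging over $h_j\sim\operatorname{Exp}(1)$ yields
\begin{equation*}
\mathcal{L}_{I'_k}(\tau r_k^{\alpha})=\exp\!\left(-2\pi p_k^a\lambda_s\int_{r_k}^{\infty}\frac{\tau r_k^{\alpha}r^{1-\alpha}}{1+\tau r_k^{\alpha}r^{-\alpha}}\,\mathrm{d}r\right).
\end{equation*}
The substitution $u=(r/r_k)^{2}\tau^{-2/\alpha}$ collapses this to $\exp(-\pi\lambda_s p_k^a\tau^{2/\alpha}r_k^{2}\,\mathcal{I}(\tau,\alpha))$ with $\mathcal{I}(\tau,\alpha)=\int_{\tau^{-2/\alpha}}^{\infty}\frac{1}{1+u^{\alpha/2}}\,\mathrm{d}u$, matching the bracketed expression in the theorem.

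Finally I would de-condition on $r_k$ using the density in (\ref{eq:r_k}). Plugging in and substituting $v=\pi\lambda_s r^{2}$ turns the integral into
\begin{equation*}
\mathbb{P}(\gamma_k>\tau)=\frac{1}{(k-1)!}\int_0^{\infty}v^{k-1}\exp\!\left(-v\bigl[1+p_k^a\tau^{2/\alpha}\mathcal{I}(\tau,\alpha)\bigr]\right)\mathrm{d}v=\frac{1}{\bigl(1+p_k^a\tau^{2/\alpha}\mathcal{I}(\tau,\alpha)\bigr)^{k}},
\end{equation*}
a clean Gamma integral. Re-inserting $\tau=2^{Mt}-1$ and substituting into the outer integral over $t$ produces the claimed expression.

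The main technical obstacle is the PPP step for the interferers. The active cells on Tier-$k$ are \emph{not} truly an independent PPP—they inherit correlations from Lemma~\ref{lem:lambda_k}'s user-replication argument, and additionally the serving cell's position carves out the ball $B(0,r_k)$ from the interferer field. The standard cellular-stochastic-geometry treatment (cf.\ Slivnyak's theorem together with the nearest-$k$ conditioning) justifies modeling the interferers as an independent PPP of intensity $p_k^a\lambda_s$ restricted to $\{r>r_k\}$; I would cite this as the key modeling step and note that it is where the approximation enters, with the tractability-vs-accuracy trade-off validated later by the Monte-Carlo results in Section IV.
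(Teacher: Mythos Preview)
Your proposal is correct and follows essentially the same route as the paper: tail-expectation formula to reduce to coverage probability, Rayleigh conditioning to get the Laplace transform of $I'_k$, PGFL of the thinned PPP of intensity $p_k^a\lambda_s$ outside $r_k$, the substitution $u=(r/r_k)^2\tau^{-2/\alpha}$, and then integrating against $f_{r_k}$ to obtain the $(1+\cdots)^{-k}$ form. The only cosmetic difference is that the paper evaluates the final $r_k$-integral via an integration-by-reduction recursion $\mathcal{I}_k=\mathcal{I}_{k-1}/\mathcal{A}$ rather than your direct Gamma-integral substitution $v=\pi\lambda_s r^2$, and the paper does not explicitly flag the PPP modeling step as an approximation.
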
 
\begin{proof}
Given the equal bandwidth allocated to each tier, the instantaneous achievable rate by the typical user from its associated cell on \textit{Tier-k} is given by
\begin{equation}
    R_k^M=\frac{1}{M}\log_2(1+\gamma_k).
\end{equation}
With $\gamma_k$ given in (\ref{eq:gamma_k}), one can evaluate the complementary cumulative distribution function (CDF) of $\gamma_k$ as follows,
\begin{align}
    \mathbb{P}[\gamma_k>\zeta]
    &=~\mathbb{E}_{I'_k,r_k}\left[\mathbb{P}\Big[h_k>\zeta(\frac{\sigma^2}{P}+I'_k)r_k^{\alpha}\Big] ~ | ~ I'_k,r_k \right]\notag\\
    &\overset{(a)}{=}\mathbb{E}_{I'_k,r_k}\left[\exp\Big( -\zeta(\frac{\sigma^2}{P}+I'_k)r_k^{\alpha} \Big) ~ | ~ I'_k,r_k \right]\notag\\
    &=\int_{r>0}e^{-\zeta \frac{\sigma^2}{P}r^{\alpha}}\mathbb{E}_{I'_k}\left[\exp\Big( -\zeta I'_k r^{\alpha} \Big) ~ | ~ I'_k,r \right]f_{r_k}(r)~\mathrm{d}r \notag\\
    &=\int_{0}^{\infty}e^{-\zeta \frac{\sigma^2}{P}r^{\alpha}}\mathcal{L}_{I'_k}(\zeta r^{\alpha};r)f_{r_k}(r)~\mathrm{d}r,
    \label{eq:SIR_h_k_ccdf}
\end{align}
where $\mathbb{E}_X[.]$ denotes expectation over the random variable $X$ and $(a)$ follows from the unit mean exponential distribution of the channel gain $h_k$ corresponding to the Rayleigh small scale fading. $\mathcal{L}_{I'_k}(s)=\mathbb{E}_{I'_k}[e^{-sI'_k}]$ is the Laplace transform of the normalized ICI on \textit{Tier-k}, $I'_k$, given in (\ref{eq:I_k}). Hence,
\begin{align}
    \mathcal{L}_{I'_k}(s;r)&=\mathbb{E}_{I'_k}\left[e^{-sI'_k}\right] \notag \\
&=\mathbb{E}_{\Phi_k^a, h}\left[\exp\left({-s \sum\limits_{j\in \Phi_k^a \BSlash b_{k_0}}h_j r_j^{-\alpha}}\right)\right] \notag \\
    &=\mathbb{E}_{\Phi_k^a}\left[\prod_{j\in \Phi_k^a \BSlash b_{k_0}} \mathbb{E}_{h_j}\Big[\exp(-s h_j r_j^{-\alpha})\Big]\right].
    \label{eq:lap_h_1st_part}
\end{align}
From the probability generating functional (PGFL) of a HPPP, one can find that \cite{6042301}
\begin{equation}
    \mathbb{E}\left[\prod_{z\in\Phi}f(z)\right]=\exp\left(-\lambda'\iint\limits_{\mathbb{R}^2}(1-f(z)) \mathrm{d}z\right),
    \label{eq:pgfl}
\end{equation}
where $\lambda'$ is the density of the HPPP $\Phi$ and $\mathbb{R}^2$ is the two dimensional Euclidean space. Since the power of the channel gain, $h$, is exponentially distributed with mean $\mu$, then,
\begin{equation}
    \mathbb{E}_h\left[e^{-\varrho h} \right]=\frac{1}{1+\mu \varrho}.
    \label{eq:exp_h}
\end{equation}
Using (\ref{eq:lap_h_1st_part}), (\ref{eq:pgfl}), and (\ref{eq:exp_h}),
\begin{align}
    \mathcal{L}_{I'_k}(s;r)&=\exp\left(-\lambda_k^a\int\limits_0^{2\pi}\int\limits_r^{\infty} \left(1-\mathbb{E}_{h_j}\left[e^{-s {h_j} {r_j}^{-\alpha}}\right] \right){r_j}~\mathrm{d}{r_j}~\mathrm{d}\theta\right) \notag \\
&=\exp\left(-2\pi\lambda_k^a\int\limits_r^{\infty} \left(1-\frac{1}{1+s{r_j}^{-\alpha}} \right){r_j}~\mathrm{d}{r_j}\right) \notag \\
    &=\exp\left(-\pi~\lambda_k^a~s^{\frac{2}{\alpha}} \int_{\frac{r^2}{s^{2/\alpha}}}^{\infty}\frac{1}{1+u^{\frac{\alpha}{2}}}\mathrm{d}u\right),
    \label{eq:lap_h}
\end{align}
where the last step is obtained by setting $u={r_j}^2s^{-2/\alpha}$.

Substituting $\mathcal{L}_{I'_k}(\zeta r^{\alpha};r)$ from (\ref{eq:lap_h}) into (\ref{eq:SIR_h_k_ccdf}), and assuming an interference-limited scenario ($\sigma^2=0$), the complementary CDF of the Signal-to-Interference Ratio (SIR) can be given as
\begin{align}
&\mathbb{P}[\gamma_k>\zeta]\notag\\
    &=\int\limits_{0}^{\infty} e^{-\pi\lambda_k^a r^2\zeta^{\frac{2}{\alpha}} \int_{\zeta^{\frac{-2}{\alpha}}}^{\infty}\frac{1}{1+u^{\frac{\alpha}{2}}}\mathrm{d}u}~\frac{2(\pi \lambda_s)^k}{(k-1)!} r^{2k-1}e^{-\pi \lambda_s r^2} ~\mathrm{d}r \notag\\ 
&\overset{(a)}{=}\int\limits_{0}^{\infty} e^{-\pi\lambda_s \mathcal{A} r^2 }~\frac{2(\pi \lambda_s)^k}{(k-1)!} r^{2k-1} ~\mathrm{d}r \notag\\
    &=\int\limits_{0}^{\infty}  \frac{(\pi \lambda_s)^{(k-1)}}{(k-1)!} r^{2(k-1)} \boldsymbol{\cdot} e^{-\pi\lambda_s \mathcal{A} r^2 }~2\pi\lambda_s r ~\mathrm{d}r \notag\\
&\overset{(b)}{=}\frac{1}{{\left(1+p_k^a\zeta^{\frac{2}{\alpha}} \int_{\zeta^{\frac{-2}{\alpha}}}^{\infty}\frac{1}{(1+u^{\frac{\alpha}{2}})}\mathrm{d}u\right)}^k},
\label{eq:gamma_k_h_ccdf}
\end{align}
where (a) follows from setting $\mathcal{A}=1+p_k^a\zeta^{\frac{2}{\alpha}} \int_{\zeta^{\frac{-2}{\alpha}}}^{\infty}\frac{1}{1+u^{\frac{\alpha}{2}}}\mathrm{d}u$, and (b) is obtained using integration by reduction such that $\mathcal{I}_k=\mathbb{P}[\gamma_k>\zeta]=\frac{\mathcal{I}_{k-1}}{\mathcal{A}}$ and $\mathcal{I}_1=1 /\mathcal{A}$. Finally, the average achievable is obtained as
\begin{align}
    \Bar{R_k^M}&=\mathbb{E}\left[ R_k^M \right]=\int_0^{\infty}\mathbb{P}\left[R_k^M>t \right] \mathrm{d}t\notag\\
    &=\int_0^{\infty}\mathbb{P}\left[\gamma_k>2^{Mt}-1 \right] \mathrm{d}t.
\end{align}
Using (\ref{eq:gamma_k_h_ccdf}) completes the proof.
\end{proof}

\subsection{Limited-Backhaul Capacity}
In the previous subsection, we assumed infinite backhaul capacity between the small cells and the core network. However, to account for the major challenge of limited backhaul capacity in UDNs, we evaluate herein the instantaneous achievable rate under such a scenario given in (\ref{eq:R_min}).

\begin{cor}
In a multiple association scheme with \textit{MultiCell} size $M$ and normalized backhaul capacity $\rho$, the average achievable rate per cell on Tier-k is given by
\begin{equation}
    \hat{R}_k^M=\int_0^{\rho}\frac{1}{\left(1+p_k^a(2^{Mt}-1)^{\frac{2}{\alpha}}\int_{(2^{Mt}-1)^{-\frac{2}{\alpha}}}^{\infty}\frac{1}{1+u^{\frac{\alpha}{2}}}\mathrm{d}u\right)^k}\mathrm{d}t
    \label{eq:R_k_min}
\end{equation}
where $p_k^a$ is obtained from (\ref{eq:lambda_k}) where $\lambda_k^a=p_k^a \lambda_s$.\end{cor}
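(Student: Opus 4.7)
The plan is to piggyback directly on Theorem \ref{thm:R_k_M}, since the only change from the infinite-backhaul case is the hard cap imposed by (\ref{eq:R_min}). I would start by writing $\hat{R}_k^M = \min(R_k^M, \rho)$, observe that this is a nonnegative random variable, and use the tail-integral identity $\mathbb{E}[X] = \int_0^\infty \mathbb{P}[X > t]\,\mathrm{d}t$.

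Next, I would exploit the fact that the cap makes the tail probability identically zero past $\rho$: for $t \geq \rho$ we have $\mathbb{P}[\hat{R}_k^M > t] = 0$, while for $t < \rho$ the events $\{\hat{R}_k^M > t\}$ and $\{R_k^M > t\}$ coincide because the cap has not yet been reached. This lets me replace the upper limit $\infty$ by $\rho$ and the integrand $\mathbb{P}[\hat{R}_k^M > t]$ by $\mathbb{P}[R_k^M > t] = \mathbb{P}[\gamma_k > 2^{Mt} - 1]$, exactly as in the proof of Theorem \ref{thm:R_k_M}.

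Finally, substituting the closed form of the SIR complementary CDF from (\ref{eq:gamma_k_h_ccdf}) with $\zeta = 2^{Mt} - 1$ gives the integrand in (\ref{eq:R_k_min}), and the truncated upper limit $\rho$ then completes the expression. I do not foresee a real obstacle here: the result is essentially immediate once one notes that capping $R_k^M$ at $\rho$ amounts to truncating the tail-integration domain, and all of the distributional work (PGFL computation, Rayleigh fading averaging, integration by reduction over the $k$ nearest SCs) has already been carried out in Theorem \ref{thm:R_k_M}. The only minor care needed is to justify that the cap does not distort the tail probability below $\rho$, which follows from the monotonicity of $\min(\cdot,\rho)$.
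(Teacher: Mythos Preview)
Your proposal is correct and follows essentially the same approach as the paper: write $\hat{R}_k^M=\mathbb{E}[\min(\rho,R_k^M)]$, apply the tail-integral identity, use the intersection $\{\rho>t\}\cap\{R_k^M>t\}$ to truncate the integral at $\rho$, and then substitute the SIR complementary CDF from (\ref{eq:gamma_k_h_ccdf}). The paper's proof is exactly this chain of equalities with no additional ingredients.
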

\begin{proof}
Under a normalized backhaul capacity constraint $\rho$, the average achievable rate per cell on Tier-k is given by,
\begin{align}
    \hat{R}_k^M&=\mathbb{E}\left[\min \left(\rho, R_k^M\right) \right]\notag\\
    &=\int\limits_0^{\infty} \mathbb{P}\left[\min\left(\rho,\frac{1}{M}\log_2(1+\gamma_k)\right)>t \right]\mathrm{d}t\notag\\
    &=\int\limits_0^{\infty} \mathbb{P}\left[\rho>t, \frac{1}{M}\log_2(1+\gamma_k)>t \right]\mathrm{d}t\notag\\
    &=\int\limits_0^{\rho}\mathbb{P}\left[ \gamma_k>2^{Mt}-1\right]\mathrm{d}t
\end{align}
using (\ref{eq:gamma_k_h_ccdf}) completes the proof.
\end{proof}

\begin{cor}
In  a  multiple  association  scheme  with  \textit{Multi-Cell} size $M$ and normalized backhaul capacity $\rho$, the average achievable ASE of the HTCUs, $\mathcal{T}_h$ is given by
\begin{equation}
\mathcal{T}_h=\sum_{k=1}^{M} \lambda_k^a ~\hat{R}_k^M
\label{eq:ase_h}
\end{equation}
where $\lambda_k^a$ is the density of active cells on \textit{Tier-k} given in (\ref{eq:lambda_k}), and $\hat{R}_k^M$ is the average achievable rate per cell on \textit{Tier-k} under limited backhaul constraints given in (\ref{eq:R_k_min}).
\end{cor}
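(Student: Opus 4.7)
The plan is to view $\mathcal{T}_h$ as the spatial average, per unit bandwidth, of the per-cell rates delivered by all active small cells, and then to split this sum across the $M$ tiers. Because the orthogonal tier-level frequency partition is already absorbed inside the $1/M$ factor of $R_k^M=\tfrac{1}{M}\log_2(1+\gamma_k)$, the per-cell rate $\hat{R}_k^M$ obtained in the preceding corollary is expressed per Hz of total system bandwidth, so the tier contributions can be added without any further bandwidth bookkeeping.

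First, I would define ASE as the mean aggregate per-Hz throughput in an arbitrary observation window $A\subset\mathbb{R}^2$, normalized by $|A|$:
\begin{equation*}
\mathcal{T}_h \;=\; \frac{1}{|A|}\,\mathbb{E}\!\left[\sum_{k=1}^{M}\,\sum_{b\in\Phi_k^a\cap A}\hat{R}_{k}(b)\right],
\end{equation*}
where $\hat{R}_k(b)$ is the instantaneous per-Hz rate delivered by active cell $b$ on Tier-$k$. Exchanging the finite sum with expectation and applying Campbell's theorem to each of the stationary marked point processes $\{(b,\hat{R}_k(b)):b\in\Phi_k^a\}$ gives
\begin{equation*}
\mathcal{T}_h \;=\; \sum_{k=1}^{M}\lambda_k^a\;\mathbb{E}^{!0}\!\left[\hat{R}_k\right],
\end{equation*}
where $\mathbb{E}^{!0}$ denotes expectation under the reduced Palm distribution of $\Phi_k^a$. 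The second step is to identify $\mathbb{E}^{!0}[\hat{R}_k]$ with $\hat{R}_k^M$ from (\ref{eq:R_k_min}): by Slivnyak's theorem for the HPPP of HTCUs and the exchangeability of tier-$k$ associations, the typical Tier-$k$ active cell serves a typical HTCU at distance distributed as $r_k$, with interferers forming an HPPP of intensity $\lambda_k^a$ on the complement, which is exactly the setting used to derive $\hat{R}_k^M$. Plugging in $\lambda_k^a$ from Lemma~\ref{lem:lambda_k} then yields (\ref{eq:ase_h}).

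The main subtlety, and the step I would flag as the principal obstacle, is justifying that Campbell's theorem applies cleanly even though the tier processes $\Phi_1^a,\ldots,\Phi_M^a$ are not independent: they are all thinned from the same parent HPPP $\Phi_s$ via the HTCU configuration, and the conflict-resolution rule (disconnecting higher-tier duplicates, as in scenario \textit{S3}) introduces further coupling. Fortunately, Campbell's formula requires only stationarity of each marked process individually, not independence across tiers, so the linearity argument still goes through; one only has to verify that the conflict thinning preserves stationarity (immediate, since it is a translation-invariant rule applied to stationary inputs) and that the per-tier Palm rate coincides with the typical-user rate used in Corollary~1 (which is guaranteed by Slivnyak's theorem on $\Phi_h$). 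Once these two points are invoked, the closed form (\ref{eq:ase_h}) is immediate.
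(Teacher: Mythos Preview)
Your proposal is correct and aligns with the paper's approach: the paper's own proof is a single sentence stating that the result follows directly from the definition of ASE as the total network achievable rate per Hz per unit area. Your invocation of Campbell's theorem, stationarity, and the Palm/Slivnyak identification is a rigorous formalization of exactly that one-line appeal to the definition, and the subtleties you flag (tier coupling, conflict-resolution thinning) are simply not addressed in the paper.
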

\begin{proof}
The proof follows directly from the definition of ASE which is the total network achievable rate per one Hz within a unit area of one km$^2$. 
\end{proof}

\subsection{Machine-Type Communication}

The active MTCDs form a thinned HPPP $\Phi_m^a$ such that $\Phi_m^a\subseteq\Phi_m$, and $\lambda_m^a=\eta\lambda_m$. These active devices are uniformly distributed over the area under study where each active MTCD connects to the nearest active cell. The active cells, however, form another thinned point process from $\Phi_s$ with density $\lambda_M^{all}$ given in (\ref{eq:lambda_k_all}). In this regard, we assume that each active cell has a limited number of RBs and that each MTCD requires exactly one RB to transmit its data. Hence, the number of supported devices by each active cell is upper bounded by the number of available RBs ($N_{RB}$) such that
\begin{equation}
    N_{m,s}^{cell}=\min(N_m^{cell},N_{RB}),
    \label{eq:N_m_s_cell}
\end{equation}
where $N_m^{cell}$ is the number of associated MTCDs to a specific cell and $N_{m,s}^{cell}$ is the corresponding number of supported devices by this cell.

The density of supported MTCDs represents a key performance metric for MTC. Although MTCDs tend to upload small packets of data, it is challenging to support their required massive number of connections. Hence, increasing the density of supported MTCDs constitutes a major challenge which can be tackled by adopting the proposed multiple association scheme. In doing so, as we increase the \textit{MultiCell} size $M$, the density of active cells $\lambda_M^{all}$ given in (\ref{eq:lambda_k_all}) increases. Hence, the ability of the network to simultaneously support a higher number of MTCDs significantly improves. 
It is to be noted that the activated cells on different tiers vary in both of their densities and cell size distributions as illustrated earlier in Fig. \ref{fig:pcp}. Accordingly, the density of associated MTCDs on each tier of active cells will vary significantly. However, it is more of concern to find the average number of MTCDs per cell, which may also differ from one tier to another depending on the cell size distribution of each tier. However, for the considered densities of SCs and HTCUs where, one can notice from Fig. \ref{fig:pcp} that the different cells on different tiers seem to have identical size distribution.
Hence, for tractability, when investigating the performance of MTC, we assume that the active cells are uniformly distributed according to an HPPP $\Phi_s^a$ with density $\lambda_M^{all}$ evaluated from (\ref{eq:lambda_k_all}). Following this assumption, the probability mass function (pmf) of the distribution of the number of MTCDs per active cell $N_m^{cell}$ can be approximated by (\ref{eq:N}) with $\lambda=\frac{\lambda_m^a}{\lambda_M^{all}}$.

\begin{lem}
The average density of supported MTCDs in a multiple association scheme with \textit{MultiCell} size $M$, number of RBs $N_{RB}$, MTCDs activation probability $\rho$, and densities $\lambda_s$, $\lambda_h$, and $\lambda_m$ for the SCs, HTCUs, and MTCDs, respectively, can be approximated by
\begin{equation}
    \lambda_m^s\approx \mathcal{J}\lambda_M^{all},
    \label{eq:lambda_m_s}
\end{equation}
\begin{equation}
    \mathcal{J}=\left(N_{RB}-\sum_{i=0}^{N_{RB}-1}\sum_{n=0}^{i}f_{N_m^{cell}}(n)\right),
    \label{eq:J}
\end{equation}
\begin{equation}
    f_{N_m^{cell}}(n)=\frac{(\frac{3.5 \lambda_M^{all}}{\rho\lambda_m+3.5 \lambda_M^{all}})^{3.5}}{\Gamma(3.5)}~\Big( \frac{\rho\lambda_m}{\rho\lambda_m+3.5 \lambda_M^{all}}\Big)^n~\frac{\Gamma(n+3.5)}{\Gamma(n+1)}.
    \label{eq:f_N_m_cell}
\end{equation}
where $\lambda_M^{all}$ is given by (\ref{eq:lambda_k_all}).
\begin{proof}
Multiplying the density of supported devices per cell $\mathbb{E}[N_{m,s}^{cell}]$ by the density of all active cells $\lambda_M^{all}$ gives (\ref{eq:lambda_m_s}). Starting from (\ref{eq:N_m_s_cell}), the density of supported devices per cell is calculated as follows,
\begin{align}
    \mathcal{J}=\mathbb{E}[N_{m,s}^{cell}]
    &=\sum_{i=0}^{\infty}\left(1-F_{N_{m,s}^{cell}}(i)\right)\notag\\
&=\sum_{i=0}^{\infty}\mathbb{P}\left[N_{m,s}^{cell}>i\right]\notag\\
    &=\sum_{i=0}^{\infty}\mathbb{P}\left[\min\left(N_m^{cell},N_{RB}\right)>i\right]\notag\\
    &=\sum_{i=0}^{\infty}\mathbb{P}\left[N_m^{cell}>i,i<N_{RB}\right]\notag\\
    &=\sum_{i=0}^{N_{RB}-1}\mathbb{P}\left[N_m^{cell}>i\right]\notag\\
    &=\sum_{i=0}^{N_{RB}-1}\left(1-\sum_{n=0}^if_{N_m^{cell}}(n)\right),
\end{align}
where $f_{N_m^{cell}}(n)=\mathbb{P}[N_m^{cell}=n]$ is obtained from (\ref{eq:N_m_s_cell}) by substituting $\lambda=\frac{\rho\lambda_m}{\lambda_M^{all}}$ leading to (\ref{eq:f_N_m_cell}) which completes the proof.
\end{proof}
\label{lem:lambda_m_s}
\end{lem}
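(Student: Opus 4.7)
The plan is to decompose the density of supported MTCDs into the density of active cells multiplied by the average number of devices supported per cell, and then compute that per-cell expectation by a tail-sum argument. Concretely, since each supported MTCD occupies exactly one RB in exactly one active cell, I would write $\lambda_m^s = \lambda_M^{all}\,\mathbb{E}[N_{m,s}^{cell}]$, where $\lambda_M^{all}$ is taken from (\ref{eq:lambda_k_all}) and $\mathbb{E}[N_{m,s}^{cell}]$ is the quantity identified as $\mathcal{J}$. This immediately reduces the lemma to computing the expectation of a nonnegative integer random variable.

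To evaluate $\mathbb{E}[N_{m,s}^{cell}]$, I would invoke the standard identity $\mathbb{E}[X] = \sum_{i=0}^{\infty}\mathbb{P}[X>i]$ valid for nonnegative integer $X$, applied to $X = N_{m,s}^{cell} = \min(N_m^{cell}, N_{RB})$. The joint event $\{\min(N_m^{cell},N_{RB})>i\}$ factors as $\{N_m^{cell}>i\} \cap \{i<N_{RB}\}$, and because $N_{RB}$ is deterministic, the second event truncates the outer sum at $i=N_{RB}-1$. What remains is $\sum_{i=0}^{N_{RB}-1}\bigl(1-\sum_{n=0}^{i} f_{N_m^{cell}}(n)\bigr)$, and the leading $\sum_{i=0}^{N_{RB}-1} 1 = N_{RB}$ term separates out to give exactly the expression (\ref{eq:J}).

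The remaining step is to justify the pmf in (\ref{eq:f_N_m_cell}). Here I would lean on the HPPP approximation for the set of active cells that is already motivated in the paragraph preceding the lemma (the cell size distributions on the different tiers appear visually identical in Fig.~\ref{fig:pcp}, so treating $\Phi_s^a$ as an HPPP of density $\lambda_M^{all}$ is reasonable). Under this approximation, the active MTCDs form an independent HPPP of density $\rho\lambda_m$ (using the paper's notation in this subsection), and the ratio of device-to-cell densities plays the role of $\lambda$ in (\ref{eq:N}). Substituting $\lambda = \rho\lambda_m/\lambda_M^{all}$ into (\ref{eq:N}) yields (\ref{eq:f_N_m_cell}), and combining this with the tail sum completes the derivation.

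The main obstacle, and the reason the statement is an approximation rather than an equality, is precisely this HPPP assumption for $\Phi_s^a$. In reality the active cells are clustered around HTCUs (their parent point process), which induces correlations and a non-Poisson cell-size distribution; moreover, the cell-size distribution genuinely differs across tiers, since tier-$k$ cells are constrained to be the $k$-th nearest to some HTCU. I would handle this the same way the authors do: argue heuristically from the empirical observation in Fig.~\ref{fig:pcp} that the per-tier cell sizes are close enough to justify treating $\Phi_s^a$ as a single HPPP, and defer a rigorous treatment (which would require the size-biased cell-area distribution of a Poisson cluster process) as outside the scope. No further delicate estimates are needed beyond that point; the rest of the proof is algebraic.
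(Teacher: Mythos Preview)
Your proposal is correct and follows essentially the same route as the paper: decompose $\lambda_m^s$ as $\lambda_M^{all}\,\mathbb{E}[N_{m,s}^{cell}]$, evaluate the expectation via the tail-sum identity applied to $\min(N_m^{cell},N_{RB})$, truncate at $N_{RB}-1$, and plug $\lambda=\rho\lambda_m/\lambda_M^{all}$ into (\ref{eq:N}) to obtain (\ref{eq:f_N_m_cell}). Your explicit acknowledgment that the HPPP assumption on $\Phi_s^a$ is the source of the approximation is a welcome addition, but the underlying argument is step-for-step the paper's own.
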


Using stochastic geometry with the aid of Slivnyak's theorem \cite{Haenggi2012}, and without loss of generality, we assume a typical MTCD located at the origin to reflect the performance of all existing active MTCDs. This typical device may or may not be supported by the network with certain probabilities. 
Hence, when studying the MTC performance, we only consider the supported MTCDs whose density is given in Lemma \ref{lem:lambda_m_s}. In this regard, we assume a typical MTCD located at the origin which belongs to the thinned HPPP $\Phi_m^s\subseteq\Phi_m^a$ with density $\lambda_m^s$ given in (\ref{eq:lambda_m_s}). The achievable rate by this MTCD on a specific RB is given by
\begin{equation}
    R_m=\frac{1}{N_{RB}}\log_2(1+\gamma_m),
\end{equation}
\begin{equation}
    \gamma_m=\frac{P_m^0 h_m r_m^{-\alpha}}{I_m+\sigma^2}=\frac{h_m r_m^{-\alpha}}{I'_m+\frac{\sigma^2}{P_m^0}},
\end{equation}
where $\gamma_m$ is the SIR, $P_m^0$ is the transmission power of the typical MTCD which is assumed fixed, and $h_m$ and $r_m$ are the channel gain and distance between the typical MTCD and the tagged nearest active cell, respectively. $I'_m$ is the normalized ICI at the tagged SC from all other supported MTCDs from neighboring cells transmitting their data on the same frequency band of the tagged RB. 

Since the typical MTCD is being served on a single RB, and all RBs are orthogonal in frequency, the ICI is generated from only those MTCDs allocated the same RB. In order to minimize the ICI, we assume that the supported MTCDs are randomly assigned to the available RBs. Hence, when evaluating the ICI, we consider another thinned HPPP $\Phi_{m,N}^s\subseteq \Phi_m^s$ which corresponds to the supported MTCDs on a specific RB such that $\lambda_{m,N}^s=\frac{\lambda_m^s}{N_{RB}}$. Thus, the normalized ICI can be expressed as
\begin{equation}
    I'_m=\sum_{j\in\Phi_{m,N}^s\BSlash d_0}H_j R_j^{-\alpha},
\end{equation}
where $H_j,R_j$ are the channel gain and distance between the tagged cell of the typical MTCD and the interfering MTCDs belonging to the neighboring cells, respectively.

\begin{thm}
In a multiple association scheme with MultiCell size $M$ and a number of available RBs per cell $N_{RB}$, the average achievable rate per MTCD is given by
\begin{equation}
    \Bar{R}_m=\frac{1}{N_{RB}}\int\limits_0^{\infty}\frac{1}{1+\frac{\mathcal{J}}{N_{RB}}(2^{t}-1)^{\frac{2}{\alpha}}\int_{(2^{t}-1)^{-\frac{2}{\alpha}}}^{\infty}\frac{1}{1+u^{\frac{\alpha}{2}}}\mathrm{d}u}\mathrm{d}t,
    \label{eq:R_m}
\end{equation}
where $\mathcal{J}$ is given in (\ref{eq:J}).
\end{thm}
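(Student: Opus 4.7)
The plan is to mirror Theorem \ref{thm:R_k_M}, adapted to the single-tier uplink MTC setting with a per-RB interferer density and an overall $1/N_{RB}$ bandwidth factor.

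First, I would apply the tail formula $\mathbb{E}[X]=\int_0^{\infty}\mathbb{P}[X>x]\,\mathrm{d}x$ together with a change of variable to write
\begin{equation*}
\Bar{R}_m=\frac{1}{N_{RB}}\int_0^{\infty}\mathbb{P}\bigl[\gamma_m>2^{t}-1\bigr]\,\mathrm{d}t,
\end{equation*}
so the remaining task is to obtain the CCDF of $\gamma_m$ in closed form.

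Next, by the same argument as in \eqref{eq:SIR_h_k_ccdf} under the interference-limited regime, the unit-mean exponential Rayleigh fading of $h_m$ gives $\mathbb{P}[\gamma_m>\zeta\mid r_m]=\mathcal{L}_{I'_m}(\zeta r_m^{\alpha};r_m)$. I would then apply the PGFL of \eqref{eq:pgfl} to the thinned HPPP of co-RB interferers $\Phi_{m,N}^s$ of intensity $\lambda_{m,N}^s=\mathcal{J}\lambda_M^{all}/N_{RB}$, substitute the channel MGF of \eqref{eq:exp_h}, and perform the change of variable $u=r_j^{2}s^{-2/\alpha}$ used to reach \eqref{eq:lap_h}. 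This produces a Laplace transform identical in form to \eqref{eq:lap_h} with $\lambda_k^a$ replaced by $\lambda_{m,N}^s$. Averaging over $r_m$ with the nearest-neighbor pdf $f_{r_m}(r)=2\pi\lambda_M^{all}r e^{-\pi\lambda_M^{all}r^{2}}$—since the typical MTCD associates with its nearest active cell in the HPPP of density $\lambda_M^{all}$—reduces to a single Gaussian integral in which the $r^{2}$ factors combine so that the ratio $\lambda_{m,N}^s/\lambda_M^{all}=\mathcal{J}/N_{RB}$ emerges in the denominator. This is exactly the $k=1$ specialization of the reduction leading to \eqref{eq:gamma_k_h_ccdf}, and substituting $\zeta=2^{t}-1$ back into the rate expression yields the claimed formula.

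The main obstacle is handling the uplink interference rigorously: the co-RB interferer MTCDs are not a genuinely independent HPPP because each is tied to being the nearest supported device of its own active cell, and the distances of physical relevance are from the interferers to the tagged cell rather than to the origin at which the typical MTCD sits. The proof therefore has to lean on the tractability approximation already declared before Lemma \ref{lem:lambda_m_s}, namely that $\Phi_s^a$ is an HPPP of density $\lambda_M^{all}$ with independent Bernoulli RB occupancy $\mathcal{J}/N_{RB}$, and that each interferer MTCD is effectively colocated with its serving cell, so that the co-RB interferers inherit an HPPP structure with an exclusion radius $r_m$ around the tagged receiver; this is what legitimates importing the geometry of the downlink calculation essentially verbatim.
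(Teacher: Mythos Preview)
Your proposal is correct and follows essentially the same route as the paper: tail formula for $\Bar{R}_m$, Rayleigh conditioning, PGFL on the thinned co-RB interferer process of intensity $\lambda_{m,N}^s=\mathcal{J}\lambda_M^{all}/N_{RB}$, nearest-neighbor averaging in $r_m$, and the $k=1$ specialization of the reduction in \eqref{eq:gamma_k_h_ccdf}. The paper phrases the uplink-geometry approximation slightly differently---setting the exclusion radius $r'=r_m$ and evaluating the aggregate interference at the origin rather than at the tagged SC, both justified by citation---but the resulting integral and the rest of the derivation are identical to yours.
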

\begin{proof}
Following the same steps in the proof of Theorem \ref{thm:R_k_M}, then,
\begin{equation}
    \mathbb{P}[\gamma_m>\zeta]=\int\limits_{0}^{\infty}e^{-\zeta \frac{\sigma^2}{P_m^0}r^{\alpha}}\mathcal{L}_{I'_m}(\zeta r^{\alpha};r')f_{r_m}(r)~\mathrm{d}r,
    \label{eq:SIR_m_ccdf}
\end{equation}
where $r_m$ is the distance between the typical MTCD and its tagged cell, and $r'$ is the minimum distance from the tagged cell to the nearest interferer. Since the Voronoi cells do not have uniform shapes, $r'$ can be larger or smaller than $r_m$. However, taking into consideration that all devices are associated to their nearest cell, one can assume $r'=r_m$ \cite{7972929}. For a typical MTCD, the distribution of $r_m$ (the distance to the nearest active cell), following the same assumption of the active cells being uniformly distributed, can be expressed as
\begin{equation}
    f_{r_m}(r)=2\pi \lambda_M^{all} r ~e^{-\pi \lambda_M^{all} r^2}.
    \label{eq:r_m}
\end{equation}

Given that the supported MTCDs follow a HPPP, the ICI at the tagged SC of the typical MTCD can be approximated by the one at the origin \cite{7972929}. Following the same steps in (\ref{eq:lap_h_1st_part}) and (\ref{eq:lap_h}), the Laplace transform of the ICI ($\mathcal{L}_{I'_m}(s;r)$) can be expressed as,
\begin{align}
    \mathcal{L}_{I'_m}(s;r)&=\mathbb{E}_{\Phi_{m,N}^s, H}\left[\exp\left({-s \sum\limits_{j\in \Phi_k^a \BSlash b_{k_0}}H_j R_j^{-\alpha}}\right)\right] \notag \\
    &=\exp\left(-\pi~\lambda_{m,N}^s~s^{\frac{2}{\alpha}} \int_{\frac{r^2}{s^{2/\alpha}}}^{\infty}\frac{1}{1+u^{\frac{\alpha}{2}}}\mathrm{d}u\right).
    \label{eq:lap_m}
\end{align}

Assuming an interference-limited scenario due to the massive number of supported MTCDs, and substituting from (\ref{eq:r_m}) and (\ref{eq:lap_m}) with $s=\zeta r^{\alpha}$ into (\ref{eq:SIR_m_ccdf}), then, 
\begin{align}
\mathbb{P}&[\gamma_k>\zeta]\notag\\
    &=\int\limits_{0}^{\infty} e^{-\pi\lambda_{m,N}^s \zeta^{\frac{2}{\alpha}} \int_{\zeta^{\frac{-2}{\alpha}}}^{\infty}\frac{1}{1+u^{\frac{\alpha}{2}}}\mathrm{d}u~r^2}~(2\pi \lambda_M^{all}~r) ~e^{-\pi \lambda_s r^2} ~\mathrm{d}r \notag\\ 
&\overset{(a)}{=}\int\limits_{0}^{\infty} e^{-\pi(1+\mathcal{B})\lambda_M^{all} ~ r^2 }~2\pi\lambda_M^{all}~r ~\mathrm{d}r =\frac{1}{1+\mathcal{B}}
\label{eq:gamma_m_ccdf}
\end{align}
where $\mathcal{B}=\frac{\mathcal{J}}{N_{RB}}\zeta^{\frac{2}{\alpha}} \int_{\zeta^{\frac{-2}{\alpha}}}^{\infty}\frac{1}{(1+u^{\frac{\alpha}{2}}}\mathrm{d}u$ and ($a$) is obtained using $\lambda_{m,N}^s=\frac{\mathcal{J}\lambda_M^{all}}{N_{RB}}$ from (\ref{eq:lambda_m_s}). Finally, the average achievable rate per MTCD can be calculated from $\Bar{R}_m=\frac{1}{N_{RB}}\int_0^{\infty}\mathbb{P}[\gamma_k>2^t-1]\mathrm{d}t$ which completes the proof.
\end{proof}

\begin{cor}
In  a  multiple  association  scheme  with  \textit{Multi-Cell} size $M$, the average achievable ASE of the MTCDs, $\mathcal{T}_m$, is given by
\begin{equation}
\mathcal{T}_m=\lambda_m^s ~\Bar{R}_m
\end{equation}
where $\lambda_m^s$ is the density of supported MTCDs given in (\ref{eq:lambda_m_s}), and $\Bar{R}_m$ is the average achievable rate per MTCD given in (\ref{eq:R_m}).
\end{cor}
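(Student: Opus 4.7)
The plan is to appeal directly to the definition of Area Spectral Efficiency together with Campbell's theorem (equivalently, Slivnyak's theorem) for stationary marked point processes, and then substitute in the two quantities already computed.

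First, I would recall that the ASE for MTC is, by definition, the expected total rate (in bps/Hz) delivered on the uplink by all supported MTCDs lying in a unit area of one $\text{km}^2$. Formally, letting $A$ denote a unit-area observation window and $R_m^{(j)}$ the instantaneous per-device uplink rate achieved by the $j$-th supported MTCD,
\begin{equation}
\mathcal{T}_m \;=\; \mathbb{E}\!\left[\sum_{j\in \Phi_m^s\cap A} R_m^{(j)}\right].
\end{equation}

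Next, I would invoke Campbell's formula for the stationary marked point process $\Phi_m^s$ (where each supported MTCD carries as its mark its achieved rate). Since $\Phi_m^s$ is stationary with intensity $\lambda_m^s$ given by Lemma \ref{lem:lambda_m_s}, Slivnyak's theorem guarantees that the Palm distribution coincides with the stationary distribution, so the per-device rate distribution seen by an arbitrary device of $\Phi_m^s$ is the same as that of the typical MTCD analysed in the preceding theorem. Consequently,
\begin{equation}
\mathbb{E}\!\left[\sum_{j\in \Phi_m^s\cap A} R_m^{(j)}\right] \;=\; \lambda_m^s\,|A|\,\mathbb{E}^{!0}[R_m] \;=\; \lambda_m^s\,\bar{R}_m,
\end{equation}
where $|A|=1~\text{km}^2$ and $\bar{R}_m$ is the quantity given in \eqref{eq:R_m}.

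There is essentially no real obstacle here; the only delicate point is justifying that the typical-device rate $\bar{R}_m$ is indeed the per-device mark mean required by Campbell's formula. This follows from the way $\bar{R}_m$ was derived: the typical MTCD was placed at the origin and conditioned to belong to $\Phi_m^s$, which is precisely the Palm-theoretic setup. Combining these two facts yields $\mathcal{T}_m = \lambda_m^s\,\bar{R}_m$, completing the proof.
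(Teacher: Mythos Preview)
Your proof is correct and follows the same approach as the paper, which simply states that the result follows directly from the definition of ASE as the total network achievable rate per Hz within a unit area. Your invocation of Campbell's formula and the Palm interpretation makes explicit the step the paper leaves implicit, but the underlying argument is identical.
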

\begin{proof}
The proof follows directly from the definition of ASE which is the total network achievable rate per one Hz within a unit area of one $km^2$. 
\end{proof}

\section{Simulation Results}
We report the impacts of different system parameters on the performance of both HTC and MTC. In doing so, we simulate different scenarios by generating different realizations of the point processes and averaging the performance over these realizations. We consider a simulation area of $1$ km$^2$ with $500$ spatial realizations of the point processes and $10$ different realization of the small scale fading in each spatial realization. Unless otherwise stated, we consider the following parameter sets. \textit{MultiCell} size is set to $M=5$, the density of small cells $\lambda_s=5000$ cells/km$^2$, HTCUs $\lambda_h=500$ HTCUs/km$^2$, and MTCDs $\lambda_m=10^6$ devices/km$^2$. We set the activation ratio of MTCDs as $\eta=0.1$ and the path loss exponent $\alpha=4$ to reflect an urban environment. Each SC is allocated $N_{RB}=10$ resource blocks (RBs) in the uplink where each MTCD requires exactly one RB. The total bandwidth available in both downlink and uplink as well as the transmission powers are normalized to unity and the noise power is set to $-174$ dBm/Hz.

\subsection{Simulation Setup}
We first generate the spatial realization of the three different HPPPs $\Phi_s$, $\Phi_h$, and $\Phi_m$ with their corresponding densities. Then, each HTCU associates to and activates the nearest $M$ SCs. All the remaining SCs are switched to idle mode. If one cell is found to be serving more than one HTCU on different tiers, the higher tier is disconnected. Next, a fraction $\eta$ of the MTCDs is randomly selected to become active and ready to transmit in correspondence to a real scenario where not all MTCDs are simultaneously active. Then, each of those active MTCDs associates to the nearest active cell. If a cell has a number of associated MTCDs greater than the available RBs $N_{RB}$, a number of active MTCDs equal to $N_{RB}$ are randomly selected to be supported. Alternatively, if the number of associated MTCDs is less than $N_{RB}$, then, they are randomly allocated RBs to transmit their data where each MTCD is allocated exactly one RB. When the achievable traffic of the downlink HTC on the radio link exceeds the available backhaul capacity, the actual rate is then upper-bounded by the backhaul capacity. Finally, the ASE is calculated by summing all the achievable actual rates over the whole network in an area of one km$^2$ with a normalized bandwidth of one Hz.

\subsection{Human-Type Communication}
\begin{figure*}
	\centering
	\subfloat[Different $\rho$ and $\lambda_s=5,000$ SCs/km$^2$.]{\includegraphics[width=0.5\textwidth, height=0.4\textwidth]{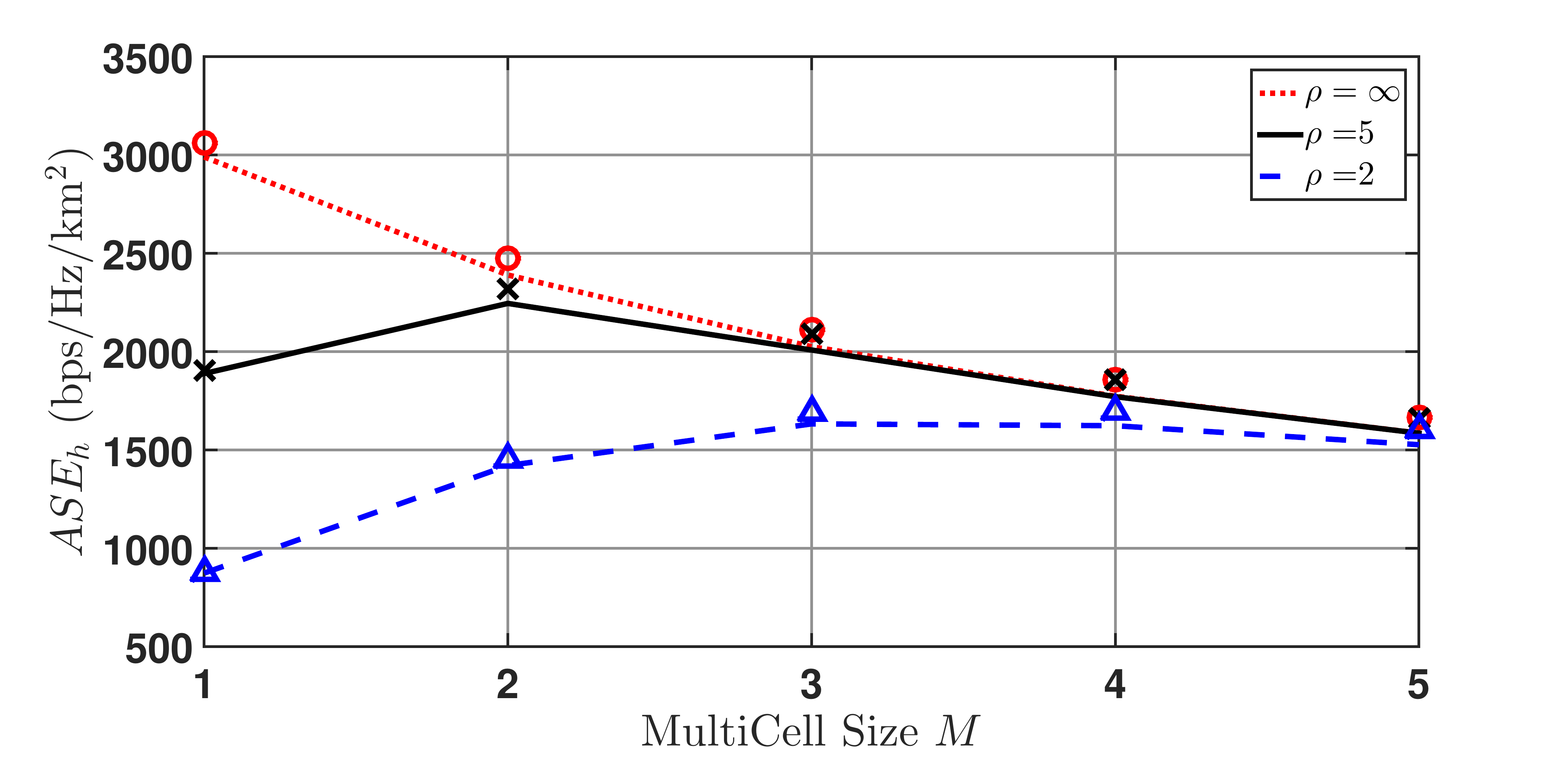}\label{fig:ASE_h_vs_M_a}}
    \subfloat[Different $\lambda_s$ and $\rho=2$ bps/Hz]{\includegraphics[width=0.5\textwidth, height=0.4\textwidth]{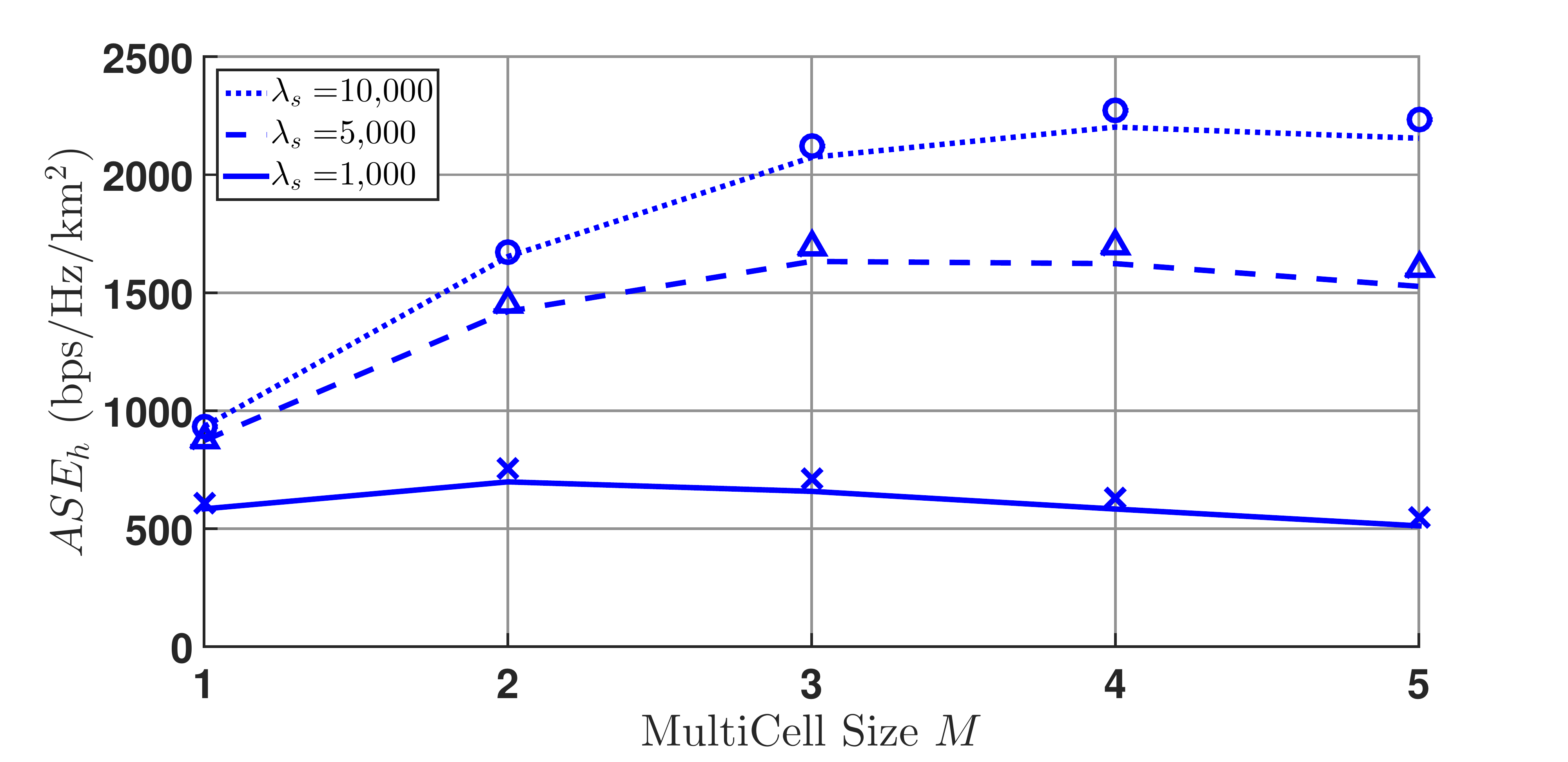}\label{fig:ASE_h_vs_M_b}}
    \caption{Achievable ASE for HTCUs versus \textit{MultiCell} size $M$ for different small cell densities and normalized backhaul capacities with $\lambda_h=500$ HTCUs/km$^2$. Lines indicate analysis while markers represent simulations.}
    \label{fig:ASE_h_vs_M}
\end{figure*}

The achievable ASE of HTCUs $\mathcal{T}_h$ versus the \textit{MultiCell} size $M$ is shown in Fig. \ref{fig:ASE_h_vs_M} via both simulations and analytical expression derived in (\ref{eq:ase_h}). In Fig. \ref{fig:ASE_h_vs_M_a}, we show how $M$ affects $\mathcal{T}_h$ under different normalized backhaul capacities. As evident from these results with no limitation on backhaul capacity (i.e., $\rho=\infty$), single association ($M=1$) represents the best alternative for HTC. This is due to the high gain of the radio channel between the HTCU and its nearest serving cell. However, for the practical case when the capacity of the backhaul links is limited, increasing $M$ can benefit the HTCU. One can notice from Fig.\ref{fig:ASE_h_vs_M} that different system parameters such as $\rho$ and $\lambda_s$ result in different values for the optimum \textit{MultiCell} size $M=M^*$ at which the ASE is maximized. Note that for $M<M^*$, the backhaul capacity limitations yield lower ASE $\mathcal{T}_h$. For $M>M^*$, the channel degradation with larger distance between the HTCU and the serving cell results in lower ASE since the achievable rate on the radio links can be supported by the backhaul links of the $M^*$ nearest cells.

\begin{figure}
	\centering
	\includegraphics[width=0.65\textwidth, height=0.45\textwidth]{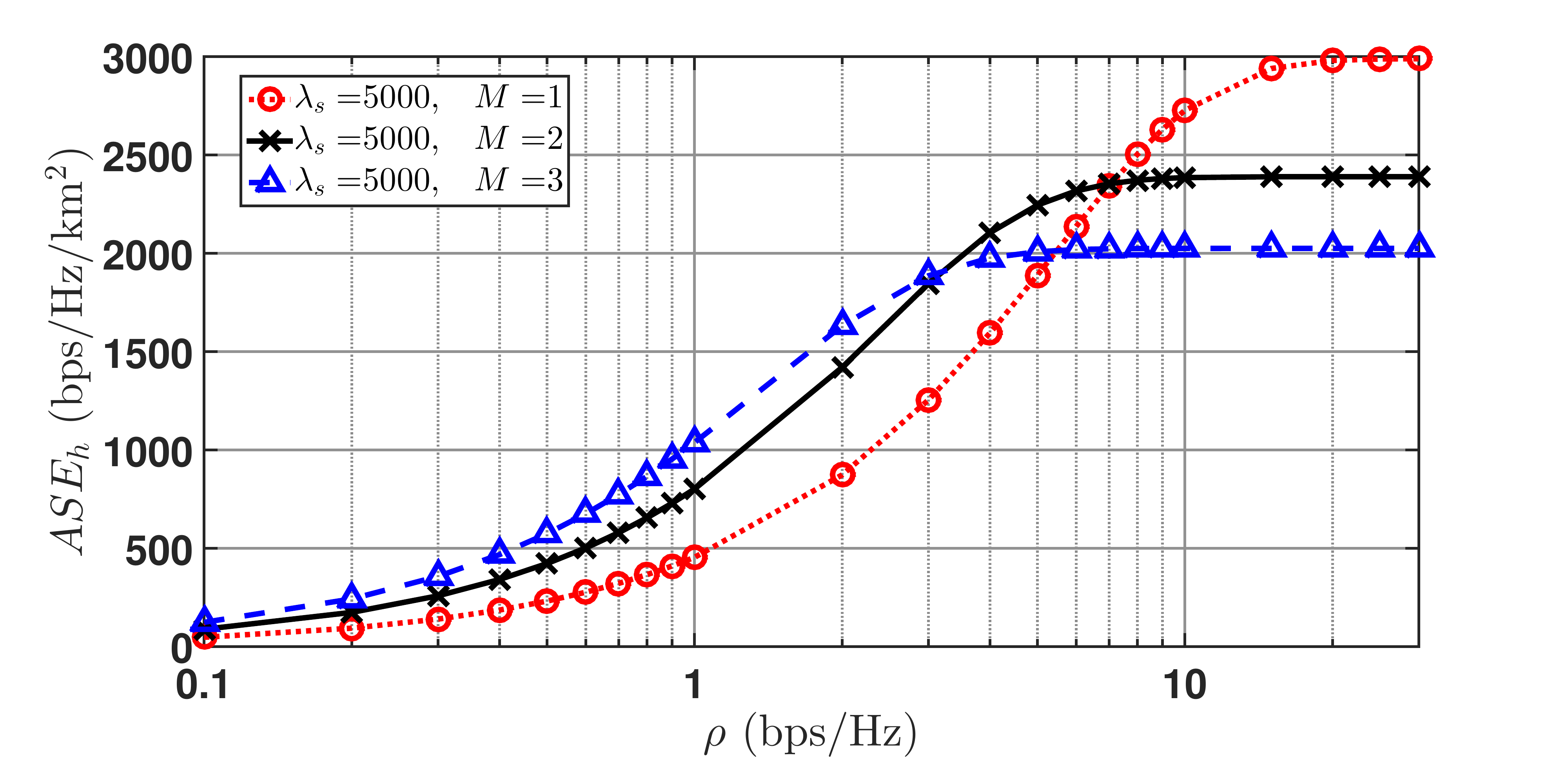}
	\caption{Achievable ASE for HTCUs versus normalized backhaul capacity $\rho$ for different \textit{MultiCell} sizes $M$, $\lambda_s=5,000$ SCs/km$^2$ and $\lambda_h=500$ HTCUs/km$^2$.}
	\label{fig:ASE_h_vs_Rho}
\end{figure}

Fig. \ref{fig:ASE_h_vs_Rho} shows the impact of increasing the backhaul capacity limit on the ASE for different \textit{MultiCell} sizes. For the specified system parameters of $\lambda_s$ and $\lambda_h$, we note that double association scheme ($M=2$) achieves the highest ASE for the HTCUs when the normalized backhaul capacity ranges from around $2$ to $8~bps/Hz$. However, if $\rho>8~bps/Hz$, single association becomes the best alternative.

\begin{figure}
	\centering
	\subfloat[$\lambda_h=500$]{\includegraphics[width=0.5\textwidth, height=0.4\textwidth]{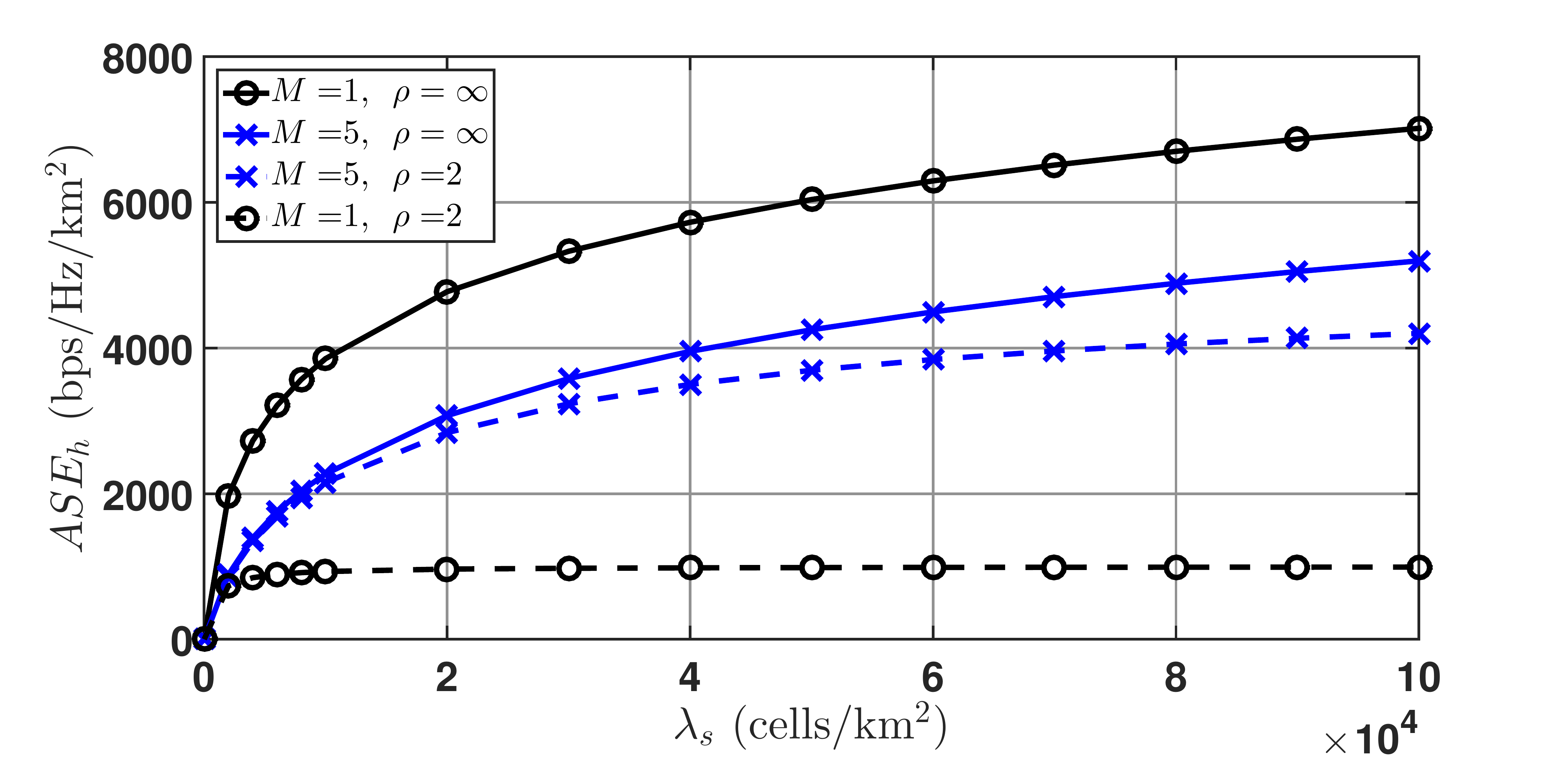}\label{fig:ASE_h_vs_la_s}}
	\subfloat[$\lambda_s=5,000$]{\includegraphics[width=0.5\textwidth, height=0.4\textwidth]{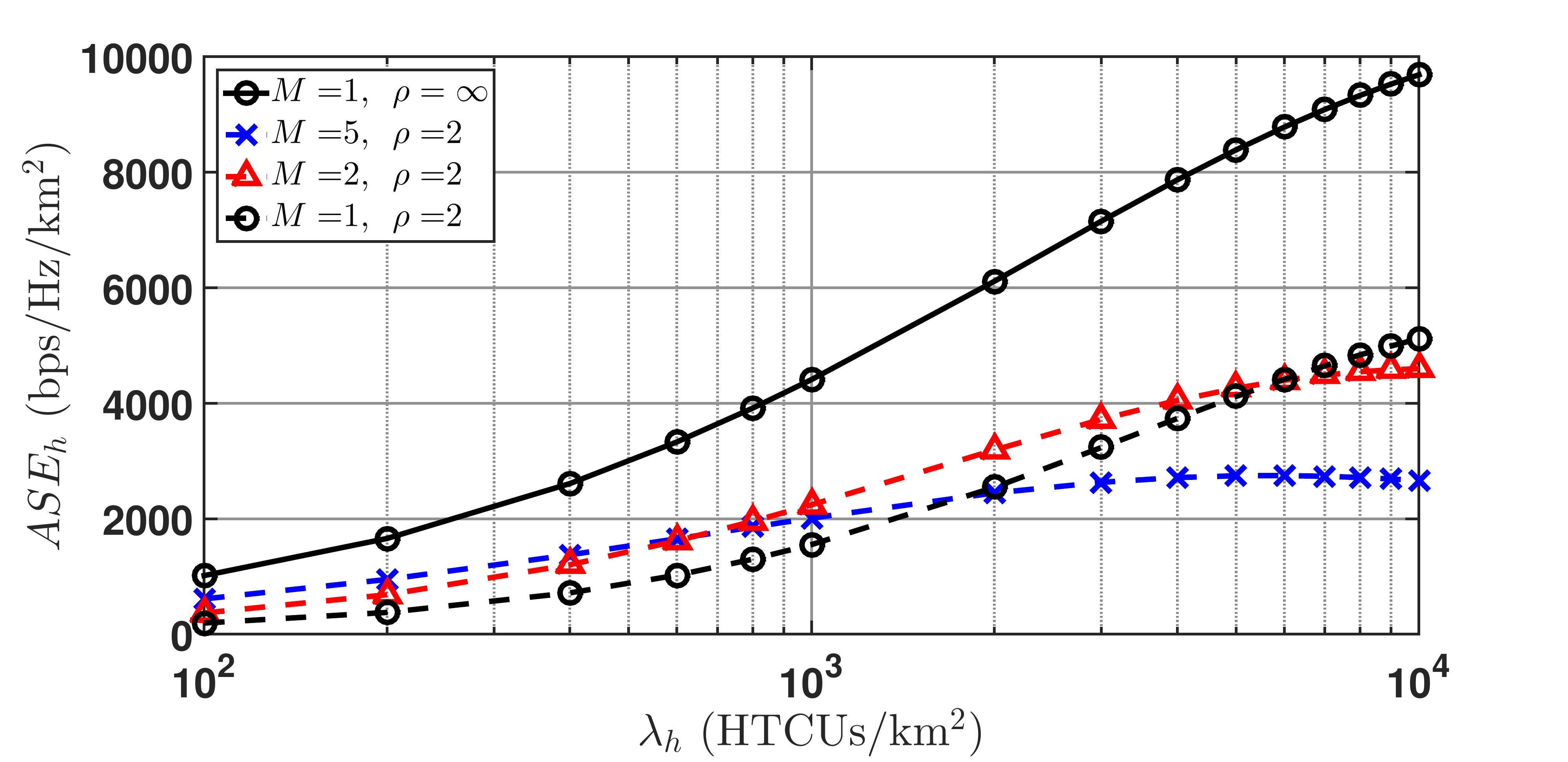}\label{fig:ASE_h_vs_la_h}}
	\caption{Achievable ASE for HTCUs versus small cell density $\lambda_s$ and HTCUs density $\lambda_h$ for different \textit{MultiCell} sizes $M$ and normalized backhaul capacities $\rho$.}
	\label{fig:ASE_h_vs_la_s_la_h}
\end{figure}

In Fig. \ref{fig:ASE_h_vs_la_s_la_h}, we show the impacts of varying the densities of both HTCUs and SCs on the performance of the HTC. One can notice that densifying the network enhances the achievable ASE for the HTCUs as shown in Fig. \ref{fig:ASE_h_vs_la_s}. However, when there exists backhaul capacity constraints, this achievable gain while adopting single association vanishes more rapidly as clear from the bottom curve.
In Fig. \ref{fig:ASE_h_vs_la_h}, we show that the achievable ASE for the HTCUs also increases with the density of HTCUs $\lambda_h$ under fixed density of small cells $\lambda_s=5000$ cells/km$^2$. However, when $\lambda_h$ further increases under backhaul capacity limitations, one can notice that the ASE starts to decrease. Such behaviour stems from the more disconnected links of higher-ordered tiers arousing from the fact that more users will be sharing the same cells as previously discussed in Fig. \ref{fig:double_assoc_scenarios}. Doing so leaves the connected tiers with only a portion of the available bandwidth compared to the whole bandwidth in case of single association. However, when the density of HTCUs $\lambda_h$ is relatively small compared to the density of small cells $\lambda_s$, multiple association improves the achievable ASE under the same backhaul capacity limitations.

\subsection{Machine-Type Communication}
In this subsection, we show the gains that can be attained by the MTCDs from adopting the proposed multiple association scheme. In Fig. \ref{fig:MTC_vs_M}, we show through both simulations and analyses, the impact of increasing the \textit{MultiCell} size $M$ on both performance metrics of MTC, namely, the density of supported MTCDs $\lambda_m^s$ and the achievable ASE $\mathcal{T}_m$. Clearly, one can notice that increasing $M$ yields significant gains in both $\lambda_m^s$ and $\mathcal{T}_m$ under different densities of small cells and HTCUs. Besides, one can notice from Fig. \ref{fig:ASE_m_vs_la_s} that the higher the \textit{MultiCell} size is, the higher the achievable gain is in the ASE of MTC with the increasing density of SCs $\lambda_s$. When single association is adopted, the ASE of MTC saturates more rapidly with $\lambda_s$. 

When the density of active MTCDs $\lambda_m^a$ is small, the achievable ASE is also small as shown in Fig. \ref{fig:ASE_m_vs_la_m}. This occurs since the available RBs by all active cells can support those active MTCDs while a plenty of resources are not being utilized. However, one can still notice that the achievable ASE is still higher with larger \textit{MultiCell} sizes $M$. When $M$ increases, more cells are activated which results in shorter distances between the active MTCDs and the SCs to which they are associated. Since all supported MTCDs transmit with fixed power, shortening the distances between the MTCDs and their serving SCs yields higher SIR and achievable ASE. As $\lambda_m^a$ increases, the ASE also increases until all the available RBs are utilized, then, the ASE saturates when no additional MTCDs can be supported. For higher $M$, there are more cells activated and the number of supported devices gets larger, hence, higher ASE is achieved.

\begin{figure*}
	\centering
	\subfloat[Density of supported MTCDs ($\lambda_m^s$)]{\includegraphics[width=0.5\textwidth, height=0.4\textwidth]{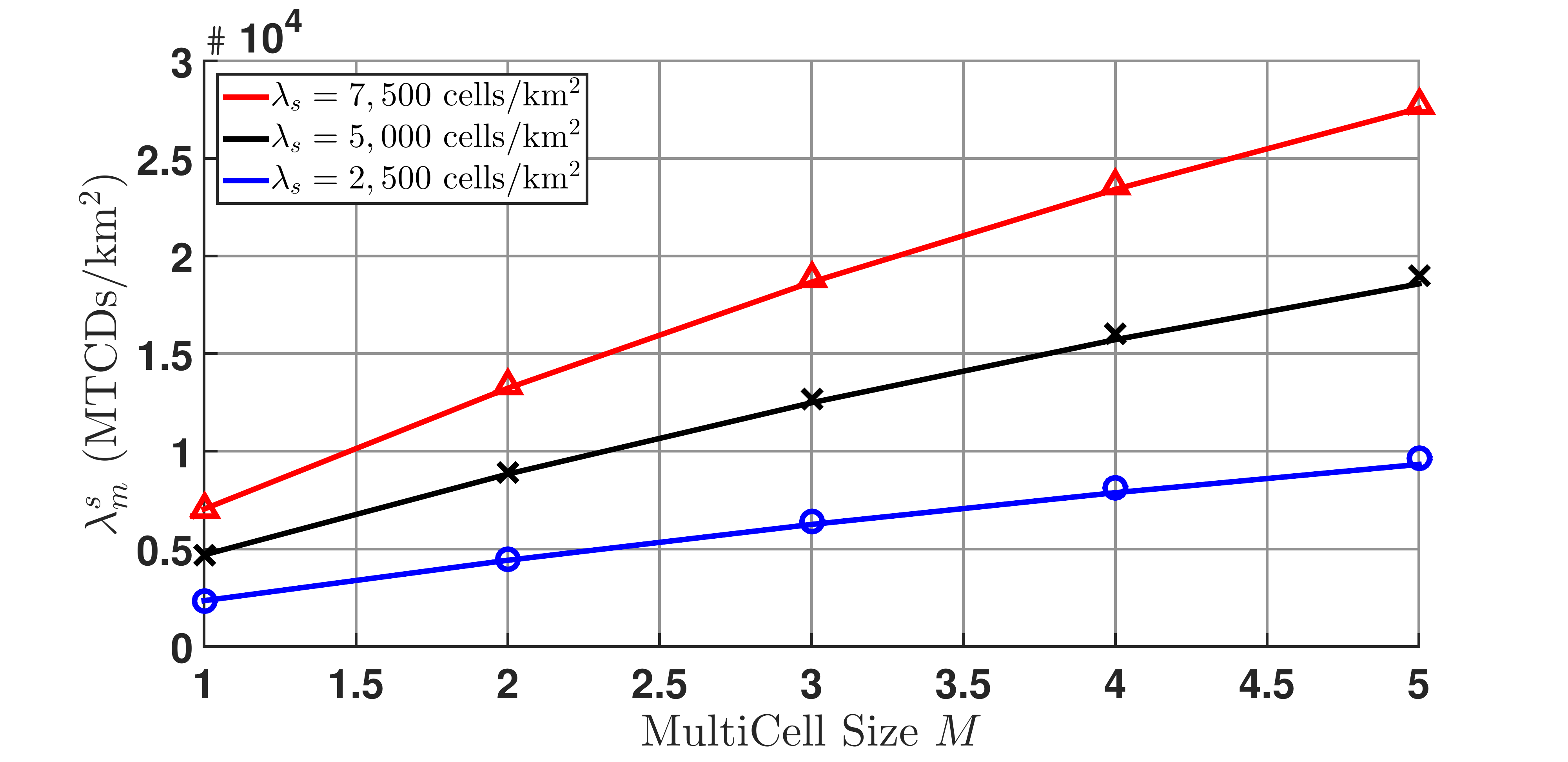}
	\label{fig:lambda_m_s_vs_M}}
    \subfloat[ASE of MTCDs ($\mathcal{T}_m$)]{\includegraphics[width=0.5\textwidth, height=0.4\textwidth]{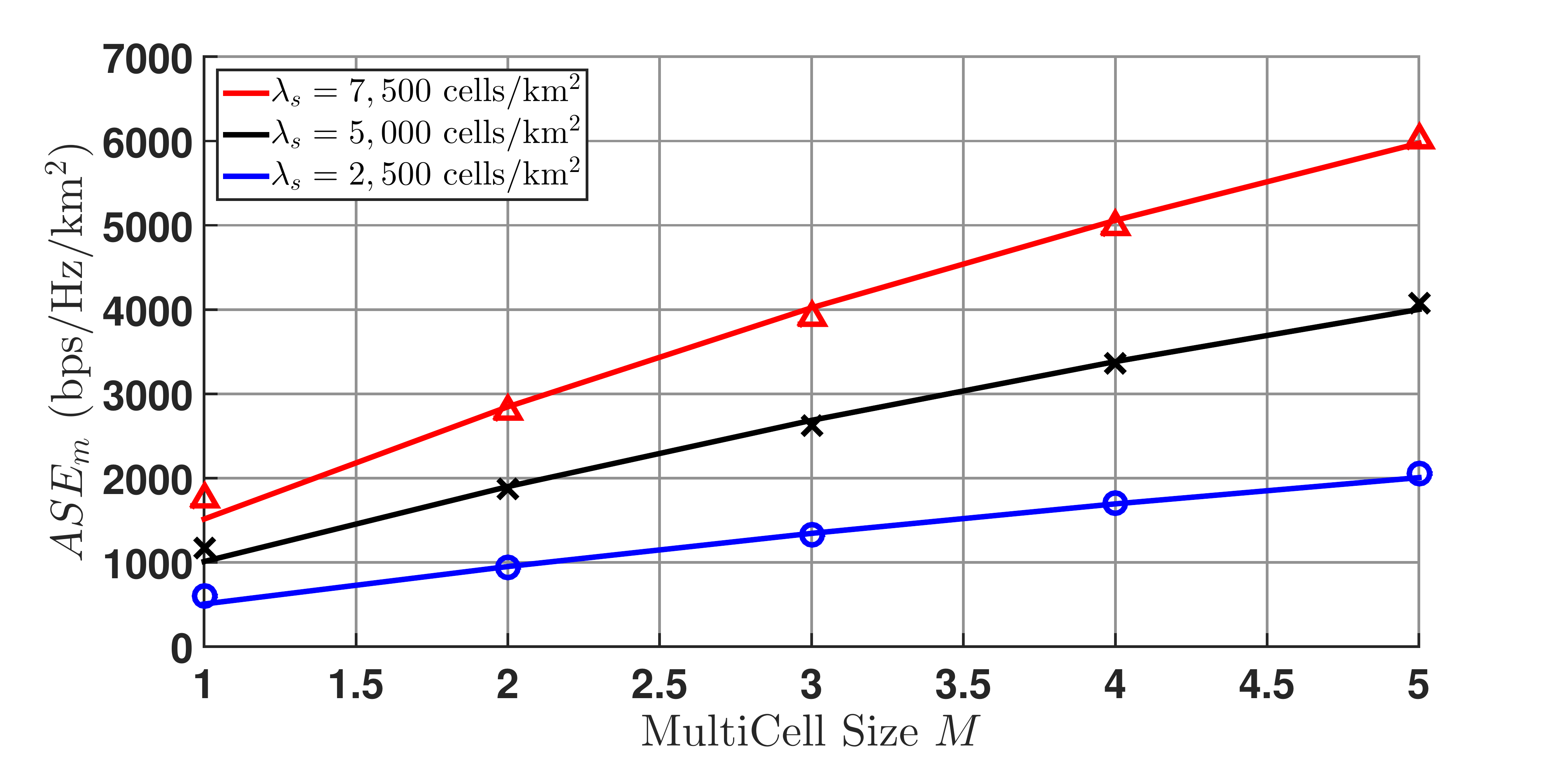}
	\label{fig:ASE_m_vs_M}}
	\caption{Density of supported MTCDs and ASE of MTC versus $M$ for different densities of SCs and HTCUs where $\lambda_s=10\lambda_h$, and active MTCDs density $\lambda_m^a=100,000$ (devices/km$^2$). Lines indicate analysis while markers represent simulations.}
	\label{fig:MTC_vs_M}
\end{figure*}

\begin{figure*}
	\centering
	\subfloat[$\lambda_m^a=100,000$ (devices/km$^2$)]{\includegraphics[width=0.5\textwidth, height=0.4\textwidth]{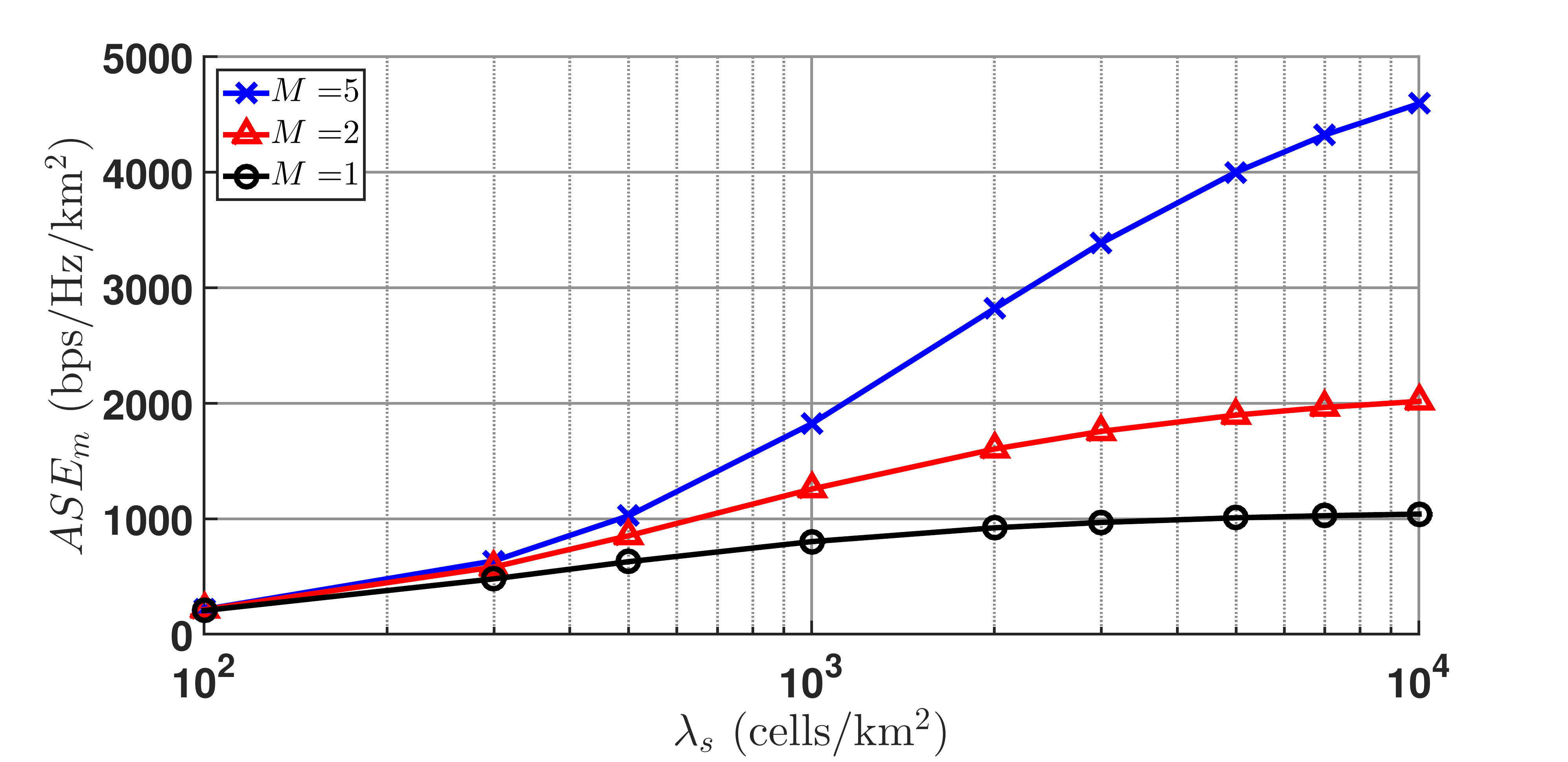}\label{fig:ASE_m_vs_la_s}}
	\subfloat[$\lambda_s=5,000$ (cells/km$^2$)]{\includegraphics[width=0.5\textwidth, height=0.4\textwidth]{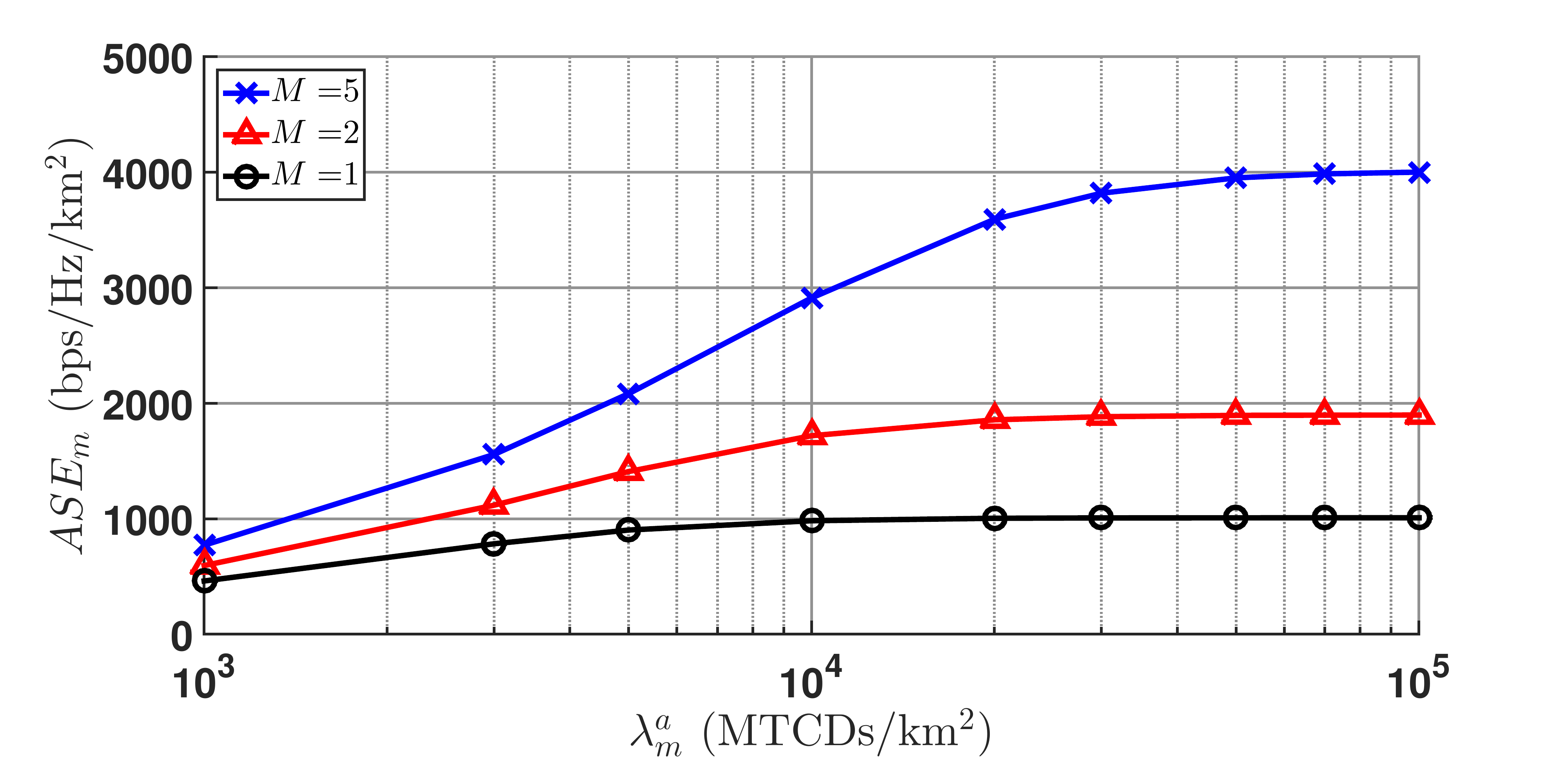}\label{fig:ASE_m_vs_la_m}}
	\caption{ASE of MTC versus small cell density $\lambda_s$ and active MTCDs density $\lambda_m^a$ for different \textit{MultiCell} sizes ($M$) with HTCUs density $\lambda_h=500$ (users/km$^2$).}
	\label{fig:MTC_vs_la_s_la_m}
\end{figure*}

\section{Conclusion}
We investigated the gains achieved from associating a Human-Type Communication User to multiple Small Cells in its vicinity. These gains are mainly obtained under limited backhaul capacity constraints that can be found in a UDN environment where fiber links require high deployment cost and time. Also, we show that the number of associated cells can be optimized for different system parameters such as backhaul capacities, cells and users densities.
In parallel, we investigated the gains achieved by the coexisting MTCDs. By adopting multiple association, more cells are activated and higher density of MTCDs can be supported. Following the proposed multiple association scheme, we showed how UDN can be useful in supporting the different use cases targeted in 5G and beyond such as Enhanced Mobile Broad-Band and massive Machine-Type Communication. In this regard, we derived a mathematical framework analysis using tools from stochastic geometry to obtain analytical expressions for the achievable Area Spectral Efficiency under backhaul capacity limitations. Further extensions to this investigation could be resource allocation optimization in terms of frequency, power, and \textit{MultiCell} size.

\bibliographystyle{IEEEtran}
\bibliography{IEEEabrv,references}
\end{document}